\documentclass[runningheads]{llncs}

\newif\ifpublish
\publishtrue

\usepackage{cite}
\usepackage{amsmath}
\usepackage{amssymb}
\usepackage{amsfonts}
\usepackage{algorithm}
\usepackage[noend]{algpseudocode}
\usepackage{graphicx}
\usepackage{textcomp}
\usepackage{xcolor}
\usepackage{xspace}
\usepackage{cleveref}
\usepackage{pifont}
\usepackage{mathtools}
\usepackage{listings}

\usepackage{tikz}
\usetikzlibrary{arrows.meta}


\newcommand{\para}[1]{\vskip 1em\noindent\textbf{#1.}}

\renewcommand{\paragraph}{\para}
\newcommand{\revisit}[1]{\textcolor{blue}{#1}}

\newcommand{\sys}{HammerHead\xspace}
\newcommand{\GST}{\textsf{GST}}

\newcommand\StateX{\Statex\hspace{\algorithmicindent}}
\newcommand\StateXX{\StateX\hspace{\algorithmicindent}}
\newcommand{\alglinenoNew}[1]{\newcounter{ALG@line@#1}}
\newcommand{\alglinenoPop}[1]{\setcounter{ALG@line}{\value{ALG@line@#1}}}
\newcommand{\alglinenoPush}[1]{\setcounter{ALG@line@#1}{\value{ALG@line}}}

\makeatletter

\makeatother

\definecolor{eclipseStrings}{RGB}{42,0.0,255}
\definecolor{eclipseKeywords}{RGB}{127,0,85}
\colorlet{numb}{magenta!60!black}

\lstdefinelanguage{json}{
  basicstyle=\normalfont\ttfamily,
  commentstyle=\color{eclipseStrings},
  stringstyle=\color{eclipseKeywords},
  stepnumber=1,
  numbersep=8pt,
  showstringspaces=false,
  breaklines=true,
  string=[s]{"}{"},
  comment=[l]{:\ "},
  morecomment=[l]{:"},
  literate=
    *{0}{{{\color{numb}0}}}{1}
    {1}{{{\color{numb}1}}}{1}
    {2}{{{\color{numb}2}}}{1}
    {3}{{{\color{numb}3}}}{1}
    {4}{{{\color{numb}4}}}{1}
    {5}{{{\color{numb}5}}}{1}
    {6}{{{\color{numb}6}}}{1}
    {7}{{{\color{numb}7}}}{1}
    {8}{{{\color{numb}8}}}{1}
    {9}{{{\color{numb}9}}}{1}
}

\newtheorem{observation}{Observation}
\newtheorem{clm}{Claim}

\begin{document}

\title{\sys: Leader Reputation for \\Dynamic Scheduling}

\ifpublish
    \author{
        Giorgos Tsimos\inst{1,2}\thanks{Work done when the author was an intern at MystenLabs.} \and 
        Anastasios Kichidis\inst{1} \and 
        Alberto Sonnino\inst{1,3} \and 
        Lefteris Kokoris-Kogias\inst{1,4}
    }
    \authorrunning{G. Tsimos, A. Kichidis, A. Sonnino, L. Kokoris-Kogias}
    \institute{
        MystenLabs \and
        University of Maryland \and 
        University College London \and
        IST Austria
    }
\else
    \author{}
    \institute{}
\fi

\maketitle

\begin{abstract}

    The need for high throughput and censorship resistance in blockchain technology has led to research on DAG-based consensus. The Sui blockchain protocol uses a variant of the Bullshark~\cite{spiegelman2022bullshark} consensus algorithm due to its lower latency, but this leader-based protocol causes performance issues when candidate leaders crash.
    In this paper, we explore the ideas pioneered by Carousel~\cite{cohen2022aware} on providing Leader-Utilization and present \sys. Unlike Carousel, which is built with a chained and pipelined consensus protocol in mind, \sys does not need to worry about chain quality as it is directly provided by the DAG, but needs to make sure that even though validators might commit blocks in different views the safety and liveness is preserved. Our implementation of \sys shows a slight performance increase in a faultless setting, and a drastic 2x latency reduction and up to 40\% throughput increase when suffering faults (100 validators, 33 faults).

\end{abstract}

\section{Introduction}
\label{sec:intro}


Advances in Byzantine Fault Tolerant State-Machine-Replication (SMR) dictated by the needs of blockchain technology to have high throughput and censorship resistance (also referred in the literature as Chain Quality~\cite{cohen2022aware}) has resulted in a surge of research around DAG-based consensus~\cite{danezis2022narwhal, keidar2022dagrider, spiegelman2022bullshark, spiegelman2023shoal, gao2022dumbo, yang2022dispersedledger, blackshear2023sui,malkhi2022maximal}. These protocols are now being deployed in production environments. For instance, Bullshark~\cite{spiegelman2022bullshark} has been adopted by the Sui blockchain~\cite{sui} and is on the roadmap of Aptos~\cite{aptos} due to its lower latency and non-reliance on setting up and maintaining a common coin. This, however, comes with the caveat that Bullshark is a leader-based protocol which results in performance deterioration when some candidate leaders inevitably crash or are taken down for maintenance and software update.

This phenomenon has been already seen in Sui's production deployment. For example on August 29th, between 15:30 and 17:30 UTC, suddenly 10\% of the validators started being less responsive. This resulted in the p95 latency going up from 3 seconds to 4.6 seconds and even the p50 latency increasing from 1.9 seconds to 2.2 seconds. This is especially alarming because at that point the system was under low load (only 130 tx/sec) so the lost capacity did not affect the latencies.
Furthermore, in real blockchains, validators vary in stake and thus leader election frequency. Some high-stake validators act as leaders more ofthen than others, but when they briefly fail or undergo maintenance, performance suffers, causing stress for node maintainers who must work tirelessly to restore them. This pressure arises because missing many leader spots affects overall performance. \sys eases this burden by promptly removing these major validators from the leader schedule temporarily and swiftly reintegrating them when they recover, ensuring seamless operation.
These findings confirm our intuition that the cost of not having a leader-aware SMR~\cite{cohen2022aware} is significant even in DAG-based consensus protocols.

To resolve this challenge we design a leader-aware SMR for DAG-based consensus protocols. Inspired by Carousel, we also rely on on-chain metrics to achieve high leader-utilization. However, doing this on a DAG instead of a chain is not trivial. Firstly, unlike chained consensus protocols, DAGs do not commit blocks in the same view for all nodes. As a result, we cannot simply rely on the consensus protocol for agreement but need to open the black-box and adapt the way the DAG is interpreted to get safety and liveness. Additionally, DAG-based consensus protocols provide chain quality by design. Hence we need not aggressively diversify on who is the leader.
The \sys protocol is run locally by each validator and does not require any extra protocol message or cryptographic tool.

To achieve Leader Utilization \sys relies on the classic parent-based voting scheme adopted by many DAG-based protocols (such as Bullshark~\cite{spiegelman2022bullshark}, Tusk~\cite{danezis2022narwhal}, Dag-Rider~\cite{keidar2022dagrider}, Fino~\cite{malkhi2022maximal}) to retrieve information regarding which parties are the fastest and most active during the current leader schedule. In every round, the fastest $2f+1$ parties to vote for a leader's proposal increase their respective scores. The scores accumulated during each schedule epoch are used during the calculation of the next leader schedule. Specifically the $f$ validators with the lowest score (corresponding to the least active validators, either due to crashes or Byzantine behavior) lose their schedule slots, which are allocated to the set of $f$ validators with the best score instead. A schedule change is triggered after a predetermined number of rounds has passed, but only upon an observable commit of the DAG in order to preserve the safety of the system.

\paragraph{Main challenges}
The main technical challenges lie with maintaining all the properties of Byzantine Atomic Broadcast, while also guaranteeing liveness. A major difference from static leader schedules is that now different honest validators might be operating under different schedules. What we show is that all honest validators eventually allocate the same interval of rounds to the same schedule and thus have agreement on the DAG and on the schedule changes.
\sys also solves different challenges than Carousel~\cite{cohen2022aware}. Carousel targets chained consensus protocols where the safety of the protocol is guaranteed even when honest validators disagree on the identity of leader: only liveness may suffer and need to be eventually restored. In contrast, \sys operates on DAG-based protocols where disagreement on the leader's identity may lead to safety violations.

\paragraph{Real-world system}
We provide a \emph{production-ready} and \emph{fully-featured} (crash-recovery, monitoring tools, etc) implementation of \sys that has been adopted by
\ifpublish
    the Sui blockchain: \sys runs within the Sui mainnet since version \texttt{mainnet-v1.9.1}~\footnote{\url{https://github.com/MystenLabs/sui/releases/tag/mainnet-v1.9.1}}.
\else
    a major new blockchain mainnet.
\fi
Our evaluation shows that \sys (i) introduces no throughput loss and even
provides small latency gains when the protocol runs in a faultless setting, (ii) drastically improves both latency and throughput
in the presence of crash-faults, and that unlike Bullshark that deteriorates with more faults, \sys maintains performance (up to 2x latency reduction and 40\% throughput improvement for 100-validator deployments suffering 33 faults); and (iii) does not suffer from any visible throughput degradation despite crash-faults.

\paragraph{Contributions}
We make the following contributions:
\begin{itemize}
    \item We present \sys, the first\footnote{Developed concurrently with Shoal~\cite{spiegelman2023shoal}, see \Cref{sec:related}.} reputation-based leader-election mechanism for DAG-Based consensus protocols.
    \item We formally prove that \sys achieves Safety, Liveness, and Leader Utilization.
    \item We provide a production-ready and fully-featured implementation of \sys and demonstrates its benefits through extensive benchmarks.
\end{itemize}

\section{Preliminaries}
\label{sec:prelims}

\subsection{Model}
\paragraph{Network} We assume a set $\Pi$ of $n$ parties (or validators; both are used interchangeably throughout this work) $\{p_1,\dots, p_n\}$ and an \emph{adaptive} adversary $\mathcal{A}$ that can corrupt up to $f<n/3$ of the parties arbitrarily, at any point. A party is \emph{crashed} if it halts prematurely at some point during execution. Parties that deviate arbitrarily from the protocol are called \emph{Byzantine} or bad. Parties that are never crashed or Byzantine are called \emph{honest}.
Parties are communicating over a partially synchronous network~\cite{JACM:DwoLynSto}, in which there exists a special event called Global Stabilization Time ($\GST$) and a known finite time bound $\Delta$, such that any message sent by a party at time $x$ is guaranteed to arrive by time $\Delta+ \max\{\GST, x\}$. 

\paragraph{Threat model} 
The adversary is computationally bounded. Pairwise points of communication between any two honest parties are considered \emph{reliable}, i.e. any honest message is \emph{eventually} (after a finite, bounded number of steps) delivered.
However, until $\GST$ the adversary controls the delivery of all messages in the network, with the only limitation that the messages must be eventually delivered. After $\GST$, the network becomes synchronous, and messages are guaranteed to be delivered within $\Delta$ time after the time they are sent, potentially in an adversarially chosen order.  

\subsection{Building Blocks}
\sys leverages the \emph{reliable broadcast} primitive. 

\begin{definition} [Reliable Broadcast] \label{def:rb}
    Each party $P_i$ broadcasts messages by calling $r\_bcast_i(m,r)$, where $m$ is a message and $r\in\mathbb{N}$ is a round number. Each party $P_j$ outputs $r\_deliver_j(m,r,i)$, where $m$ is a message, $r$ is a round number, and $i\in[n]$ the index of party $P_i$ who called the corresponding $r\_bcast_i(m,r)$. A Reliable Broadcast protocol achieves the following properties:

\begin{itemize}
    \item[] \textbf{Agreement.} If an honest party $P_i$ outputs $r\_deliver_i(m,r,k)$, then all other honest parties $P_j$ eventually output $r\_deliver_j(m,r,k)$.
    \item[] \textbf{Integrity.} For every round $r\in\mathbb{N}$ and for every $k\in[n]$, an honest party $P_i$ outputs $r\_deliver_i(m,r,k)$ at most once, regardless of $m$.
    \item[] \textbf{Validity}. If an honest party $P_k$ calls $r\_bcast(m,r)$, then eventually every honest party $P_i$  outputs \allowbreak$r\_deliver_i(m,r,k).$ 
\end{itemize}
\end{definition}



\subsection{Problem Definition}

Our result focuses on achieving \emph{Byzantine Atomic Broadcast (BAB)}, while also satisfying additional properties. 
In order to keep notation clear between reliable and atomic broadcast, we refer to the BAB broadcast and deliver events as \emph{$a\_bcast(m,r)$} and \emph{$a\_deliver(m,r, p_j)$} respectively, where $m$ is some message, $r\in\mathbb{N}$ is a round number and $p_j$ is a party out of the $n$ parties. 

\begin{definition}[Byzantine Atomic Broadcast] \label{def:bab}
    Each party $P_i$ broadcasts messages by calling $a\_bcast_i(m,r)$, where $m$ is a message and $r\in\mathbb{N}$ is a round number. Each party $P_j$ outputs $a\_deliver_j(m,r,i)$, where $m$ is a message, $r$ a round number and $i\in[n]$ the index of party $P_i$ who called the corresponding $a\_bcast_i(m,r)$. 
    A \emph{Byzantine Atomic Broadcast} protocol satisfies the properties of Reliable Broadcast along with:
    \begin{itemize}
    \item[] \textbf{Total Order} If an honest validator $P_i$ outputs $\allowbreak a\_deliver_i\allowbreak(m,r,k)$ before $\allowbreak a\_deliver_i(m',r',k')$, then no honest party $P_j$ outputs $a\_deliver_j(m',r',k')$ before $\allowbreak a\_deliver_j(m,r,k)$.
\end{itemize}
\end{definition}

\noindent An additional property of interest to this work is \emph{Leader Utilization}, introduced in Spiegelman et al.~\cite{spiegelman2023shoal}. 

\begin{definition}[Leader-Utilization]\label{def:leader_util}
    A BAB protocol achieves \emph{Leader Utilization} if,  in crash-only executions, after GST, the number of rounds $r$ for which no honest party commits a vertex formed in $r$ is bounded.\\
\end{definition}

\section{The \sys Protocol}

We propose a protocol that satisfies both Safety and Liveness, while operating on a dynamically changing schedule of leaders. Our protocol is inspired by Carousel as far as how we identify the well-performing validators and giving them more chances of being leaders. Unlike Carousel, we need not worry about chain quality but we need to take extra steps to make sure that the protocol is safe and live although it is running over a DAG (see \Cref{sec:related} for a more detailed comparison).

The protocol starts with an initial schedule $S_0$, which is a fair round-robin unbiased of the results of the previous epoch. The schedule can be initialized by randomly permuting all validators based on their stake; For example, if each validator $u$ holds stake $\mathsf{stake}(u)$ and the total number of rounds that require leaders is $TR$, we initialize the schedule with each validator $u$ being the leader of $TR\times\mathsf{stake}(u)/\sum_{u}\mathsf{stake}(u)$ rounds in order and then randomly permute them.

To compute a new schedule $S'$ to switch from schedule $S$, we initialize a table $pos$ with all validators. In $pos$ there are two columns per validator, one with the initial number of slots they have on the previous schedule $S$ and another with the number of slots they will have on the schedule $S'$.
Each validator goes through all the rounds where $S$ is active and computes a data structure $\mathsf{scores}(\cdot)$ mapping each validator to their reputation score. Every validator starts with a reputation score of 0. 
Upon committing a sub-dag in Bullshark we update the reputation score of each validator, using some deterministic rule, in order to guarantee agreement across views. 
Since all validators observe the same sequence of committed sub-dags, they all attribute the same scores to validators. 

We propose the deterministic rule for updating reputation scores to be that \emph{each validator receives 1 point each time they vote for a leader’s proposal (i.e., there is a parent link from the block of the validator at round $r$ to the leader, according to schedule $S$, of round $r-1$).} 
The reputation score of each validator is increased by the number of points they accumulate.

The first subtle challenge to preserving Safety is that when we commit a sub-dag in Bullshark this happens through a subjective view of the DAG. This means that two validators might see a different subset of votes or they might even commit sub-dags at vastly different points in time. In order to resolve the first challenge we introduce a delay at the calculation of the reputation score. More specifically, although committing the leader is subjective what is consistent is that (a) every validator will eventually commit that same leader and (b) when the leader is committed the subDAG that gets committed is the same. Leveraging these two observations we calculate the reputation score up to but excluding the committed leader. 


Furthermore, we separate the execution of the BAB in schedule epochs, each of which lasts approximately $\mathsf{T}$ leaders\footnote{It might be slightly larger because the leader after the $\mathsf{T}$-th commit are crashed.}. Once the epoch ends the validators compute a new leaders' schedule $S'$ as follows:
They select a set $B$ that contains at most $f$ validators (by stake); this set contains the validators with the lowest reputation scores.
They also select a set $G$ of equal size to $B$ $(|G|=|B|)$; this set contains the validators with the highest reputation scores. Any ties for either of the sets are deterministically resolved.
The new schedule $S'$ is computed by round-robin replacing each $B$ validator with a $G$ validator from the previous schedule $S$.
To do the replacement we perform the following:
\begin{itemize}
    \item Pick a validator $P_b$ from $B$
    \item Find a slot they are leaders in $S$
    \item  Pick a validator $P_g$ from $G$
    \item Set $pos[v_g,1] \gets pos[v_g,1]+1 \; ; \; pos[v_b,1] \gets pos[v_b,1]-1$ and replace $P_b$ with $P_g$ in the new schedule $S'$
\end{itemize}
Once the $S'$ is calculated, the new schedule takes effect immediately.

The second and most critical challenge of \sys appears during the schedule switch. This is because validators may not commit a leader immediately, but through recursion over the DAG and after an unbounded number of rounds before GST. 
Nevertheless, we show that if we carefully apply the schedules through and induction and without skips we can avoid any Safety violations.

Finally, Liveness is also at risk as validators in Bullshark only wait to see the block proposal of the leader every time. However, if validators are not synchronized and each one has a different belief of who is the leader of round $r$ (because they are in a different schedule) then no leader might succeed in committing. An easy solution to this would be to forfeit responsiveness and make every round last $\Delta$. Fortunately, in \sys avoids this and creates a responsive protocol by opening up the Bullshark algorithm and ensuring that after GST all honest validators will be in sync (or the adversary will have to keep them out of sync by committing subdags, effectively providing Liveness as well).




\begin{algorithm*}[t]
    \caption{Data structures and basic utilities for party $p_i$}
    \label{alg:dataStructures}
    \begin{algorithmic}[1]
    \footnotesize

        \Statex \textbf{Local variables:}
        \StateX struct $\textit{vertex } \mathsf{v}$:
        \Comment{The struct of a vertex in the DAG}
        \StateXX $ \mathsf{v}.\textit{round}$ - the round of $ \mathsf{v}$ in the DAG
        \StateXX $ \mathsf{v}.\textit{source}$ - the party that broadcast $ \mathsf{v}$
        \StateXX $ \mathsf{v}.\textit{block}$ - a block of transactions information
        \StateXX $ \mathsf{v}.\textit{edges}$ - a set of at least $n-f$ vertices in
        $ \mathsf{v}.\textit{round}-1$
        \Comment{Provide fairness}
        \StateX $DAG_i[]$ - An array of sets of vertices
        \StateX \revisit{$\mathsf{activeSchedule}$} - auxiliary info related to the schedule change. Input to the deterministic \textsc{getLeader}($\cdot$) function.

        \Statex
        \Procedure{\textsc{path}}{$ \mathsf{v}, \mathsf{u}$} \Comment{Check if exists a path from $\mathsf{v}$ to $\mathsf{u}$ in the DAG}
        \State \Return exists a sequence of $ \mathsf{k} \in \mathbb{N}$,
        vertices $ \mathsf{v_1},  \mathsf{v_2},\ldots,  \mathsf{v_k}$  s.t.\
        \StateXX $ \mathsf{v_1} =  \mathsf{v} $, $ \mathsf{v_k} =  \mathsf{u}$, and $\forall j \in [2..k]
            \colon  \mathsf{v_j} \in \bigcup_{\mathsf{r} \geq 1}
            \mathsf{DAG_i[r]} \wedge \mathsf{v_j} \in  \mathsf{v_{j-1}}.\textit{edges} $
        \EndProcedure


        \Statex
        \Procedure{\textsc{getAnchor}}{$\mathsf{r}$}
        \State \revisit{ $\mathsf{p} \gets \textsc{getLeader}(\mathsf{r}, \mathsf{activeSchedule})$ }
        \Comment{Any public deterministic function}
        \If{$\exists \mathsf{v} \in \mathsf{DAG[r]}$ s.t.\ $\mathsf{v}.source = \mathsf{p}$}
        \State \Return $\mathsf{v}$
        \EndIf
        \State \Return $\bot$
        \EndProcedure

        \alglinenoNew{counter}
        \alglinenoPush{counter}

    \end{algorithmic}
\end{algorithm*}

\begin{algorithm}
\caption{\sys:  algorithm for party $p_i$.}
\begin{algorithmic}[1]
\alglinenoPop{counter}
\footnotesize

        \Statex \textbf{Local variables:}
      
        \StateX $\mathsf{orderedVertices} \gets \{\}$
        \StateX $\mathsf{lastOrderedRound} \gets 0$ \Comment{or $\mathsf{lastCommittedRound}$}
        \StateX $\mathsf{orderedAnchorsStack}  \gets $ initialize empty stack
    
       \Statex

        \Procedure{\textsc{TryCommitting}}{$\mathsf{v}$}
         \If{$\mathsf{v}.\textit{round} \text{ mod }2 = 1$  or $\mathsf{v}.\textit{round} = 0$}
            \State return 
        \EndIf    
    
        \State $\mathsf{anchor} \gets$ \textsc{getAnchor($\mathsf{v}.\textit{round-2}$)}
        \State $\mathsf{votes} \gets \mathsf{v}.\textit{edges}$
        \If{$|\{ \mathsf{v'} \in \mathsf{votes}: \textsc{path}(\mathsf{v'},\mathsf{anchor})\}|
        \geq f+1$}
        
        \State \textsc{orderAnchors($\mathsf{anchor}$)}
        
        \EndIf

    \EndProcedure
       
     \StateX

    \Procedure{\textsc{orderAnchors}}{$\mathsf{v}$}
        \State $\mathsf{anchor} \gets \mathsf{v}$
        \State $\mathsf{orderedAnchorsStack}.\textit{push}(\mathsf{anchor})$
        \State $\mathsf{r} \gets \mathsf{anchor}.\textit{round} -2$
        \While{$\mathsf{r} > \mathsf{lastOrderedRound}$}        \State $\mathsf{prevAnchor} \gets \textsc{getAnchor}(\mathsf{r})$
            
            \If{$\textsc{path}(\mathsf{anchor}, \mathsf{prevAnchor})$}
            \State $\mathsf{orderedAnchorsStack}.\textit{push}(\mathsf{prevAnchor})$
            \State $\mathsf{anchor} \gets \mathsf{prevAnchor}$ 
            
        \EndIf
        
        \State $\mathsf{r} \gets \mathsf{r} -2$
        \EndWhile
        
        \State $\mathsf{lastOrderedRound} \gets \mathsf{v}.\textit{round}$
        \State $\textsc{orderHistory()}$
        
    \EndProcedure
    
    \Statex
    
    \Procedure{\textsc{orderHistory()}}{}
        \While{$\neg \mathsf{orderedAnchorsStack}.\textit{isEmpty}()$} 
        \State $\mathsf{anchor} \gets \mathsf{orderedAnchorsStack}.\textit{pop}()$ 
        \State \revisit{$t \gets \mathsf{activeSchedule.initialRound}+\mathsf{T}$} \Comment{\revisit{$\mathsf{T}:$ schedule-change frequency}}

        \If{\revisit{$t\le\mathsf{anchor.round}$}} 
            \State\revisit{$\mathsf{activeSchedule}\gets\textsc{updateSchedule}(\mathsf{anchor})$}
            \State \Return
        \EndIf
        \State $\mathsf{verticesToOrder} \gets \{\mathsf{v} \in \bigcup_{\mathsf{r} > 0}
        \mathsf{DAG_i[r]} \mid \textsc{path}(\mathsf{anchor},\mathsf{v}) \wedge \mathsf{v} \not\in $
       $\mathsf{orderedVertices}\}$
          \For{$\textbf{every} ~\mathsf{v} \in \mathsf{verticesToOrder}$ in some deterministic order}
              \State \textbf{order} $\mathsf{v}$ \Comment{\revisit{output $a\_deliver_i(v.block, v.round, v.source)$}}
              \State $\mathsf{orderedVertices} \gets \mathsf{orderedVertices} \cup \{\mathsf{v}\}$
          \EndFor
          
        \EndWhile
        \EndProcedure  

        \StateX


        \Procedure{\textsc{updateSchedule}}{$\mathsf{v}$}
        
        \For{\revisit{all rounds from $\mathsf{activeSchedule.initialRound}$} up to \revisit{$\mathsf{v.round}$}}
        \State \revisit{Add 1 to each validator's $\mathsf{scores}(\cdot)$ that voted for previous round's leader}
        \State \revisit{Compute $\mathsf{schedule}$: the updated schedule according to $\mathsf{scores}(\cdot)$}
        \EndFor
        \State \Return \revisit{$\mathsf{schedule}$}
        \EndProcedure  

        \alglinenoPush{counter}

\end{algorithmic}
\label{alg:ESBullshark}

\end{algorithm}

\subsection{Protocol Specification} Our protocol can be seen in~\cref{alg:dataStructures} and \cref{alg:ESBullshark}. It operates on top of a DAG-based BAB protocol, such as Bullshark~\cite{CCS:SGSK22}. The main idea is to change the leader scheduling from static to adaptive, based on reputation scores. We have already explained how the scores are computed by each validator in our practical application. However, our solution is not specific to the calculation of the schedule and could work with any \emph{deterministic} schedule-change rule.  
There are also slight changes that our protocol incurs to the initial Bullshark protocol.

Specifically, since schedules are being updated after committing an anchor by observing vertices that voted for an anchor; this means that schedule changes may need to occur retroactively. As explained already, there may be cases where a validator was operating under a previous schedule for a few rounds, perhaps because they were unable to commit an anchor for some rounds. Once that validator commits a new anchor, they update their view accordingly and observe the new schedule. Thus, they need to retroactively apply the new schedule for the time-period that they where operating under the previous schedule, while the new schedule was active.



\subsection{Protocol Correctness}
We now prove the correctness of our construction. We show that \sys satisfies the properties of BAB, as well as Liveness and Leader Utilization.

Firstly, we can observe that apart from the schedule change, by construction, \sys executes the same as the eventually synchronous Bullshark. This means that if an epoch started with schedule $S_0$, until the first schedule change from the first honest validator, we can immediately derive all Bullshark properties. This is useful for the first part of the next 
\begin{observation}\label{obs:bullshark_per_schedule}
    Within the same schedule, \sys operates exactly as the eventually synchronous Bullshark protocol, and thus has the same properties. 
\end{observation}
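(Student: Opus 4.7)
The plan is to argue by direct comparison of the code paths in \Cref{alg:ESBullshark} with those of eventually synchronous Bullshark. First, I would fix an interval $[r_0, r_1)$ of rounds during which every honest validator has $\mathsf{activeSchedule}=S$ (in particular, $r_1 \le r_0+\mathsf{T}$, since otherwise the branch in \textsc{orderHistory} that calls \textsc{updateSchedule} would fire upon popping an anchor from that interval). Within this interval, the DAG-construction layer underneath \sys is untouched: vertices are still produced and disseminated by the same reliable-broadcast primitive and contain the same $n-f$ parent links, so the distribution and shape of $\mathsf{DAG}_i$ is indistinguishable from what Bullshark would produce.

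Next I would focus on the commit layer. The only places where \sys reads the schedule are \textsc{getAnchor}, which calls $\textsc{getLeader}(r,\mathsf{activeSchedule})$, and \textsc{updateSchedule}. As long as $\mathsf{activeSchedule}=S$ is constant, $\textsc{getLeader}(\cdot, S)$ is a \emph{deterministic} function of the round alone, which is precisely the interface that Bullshark's static leader election provides. Hence \textsc{getAnchor}, \textsc{TryCommitting}, and the path/vote check ``$|\{\mathsf{v}'\in\mathsf{votes}:\textsc{path}(\mathsf{v}',\mathsf{anchor})\}|\ge f+1$'' reduce line-for-line to their Bullshark counterparts. The recursive \textsc{orderAnchors} routine and the deterministic ordering inside \textsc{orderHistory} are identical as well; the only new code is the early-return guarded by $t\le\mathsf{anchor.round}$, and by the choice of interval this guard is false for every $\mathsf{anchor}$ committed in $[r_0,r_1)$.

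Combining the two layers, a round-by-round coupling shows that for any honest party $p_i$, the sequence of $a\_deliver_i$ events emitted inside the interval, the set of committed anchors, and the induced sub-DAG ordering are exactly those produced by Bullshark run with the static schedule $S$. Because Safety, Total Order, Integrity, and the Liveness/Validity properties of Bullshark are defined over precisely these observables, they lift unchanged to \sys restricted to a single schedule.

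The only mildly delicate step is making the interval $[r_0,r_1)$ precise in a multi-party sense: different honest parties may call \textsc{updateSchedule} at different wall-clock times because they commit anchors at different times. I would handle this by picking $r_0$ and $r_1$ as round indices (not times) and noting that the schedule transition is indexed by $\mathsf{anchor.round}$, which is a property of the DAG rather than of the local clock; thus the coupling above holds for each honest party individually over the same round range, which suffices for the observation as stated.
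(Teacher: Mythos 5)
Your proof is correct and takes essentially the same approach the paper does: the paper simply asserts in the sentence preceding the observation that ``apart from the schedule change, by construction, \sys executes the same as the eventually synchronous Bullshark,'' and your line-by-line traversal of \Cref{alg:ESBullshark} (identifying $\textsc{getLeader}(\cdot,\mathsf{activeSchedule})$ as the only schedule-dependent call and noting the $\textsc{updateSchedule}$ guard stays inactive inside the interval) is a faithful, more explicit elaboration of that ``by construction'' claim. Your closing caveat about indexing the interval by round rather than wall-clock time is a reasonable clarification, consistent with the paper's own framing (``until the first schedule change from the first honest validator'').
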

    

We also have the following claims, which can be proven as in~\cite{keidar2022dagrider}.
\begin{clm}\label{clm:causal_history}
    When an honest party $P_i$ adds a vertex $u$ to its $DAG_i$, the entire causal history of $u$ is already in $DAG_i$.
\end{clm}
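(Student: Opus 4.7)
The plan is to prove this by induction on the round number of the vertex $u$. The key structural invariant I would rely on — inherited from the underlying DAG protocol (Bullshark/DAG-Rider) and not altered by \sys, which only touches the leader-schedule logic — is that a party buffers a reliably-delivered vertex and only inserts it into $DAG_i$ once all vertices it references in its $\textit{edges}$ field are already present locally. This ``add-when-all-parents-present'' rule is exactly what powers the induction.

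For the base case, I would take a vertex of round $1$ (the minimal meaningful round above genesis). Such a vertex either has no outgoing edges or only points to the genesis layer, so its causal history is empty (or the trivially-present genesis), hence vacuously contained in $DAG_i$.

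For the inductive step, assume the claim for every vertex of round strictly less than $r$, and consider a vertex $u$ of round $r$ at the moment $P_i$ adds it to $DAG_i$. By the invariant, every $v \in u.\textit{edges}$ (each of round $r-1$) is already in $DAG_i$. Applying the inductive hypothesis to each such $v$ gives that the entire causal history of $v$ is in $DAG_i$ as well. Since the causal history of $u$ is precisely $u.\textit{edges} \cup \bigcup_{v \in u.\textit{edges}} \mathrm{hist}(v)$, the whole causal history of $u$ is in $DAG_i$ at the time of insertion, which completes the induction.

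The only non-routine point — and what I would flag as the main obstacle — is justifying the ``add-when-all-parents-present'' invariant inside \sys rather than just asserting it. I would argue this by inspection of \cref{alg:dataStructures,alg:ESBullshark}: \sys inherits the vertex-reliable-broadcast and DAG-insertion pipeline from Bullshark unchanged, modifying only the leader-selection (\textsc{getLeader}/\textsc{updateSchedule}) and commit logic. Because \textsc{path} and \textsc{TryCommitting} presuppose the standard buffered-insertion discipline, the invariant carries over verbatim, and the proof reduces to the Keidar et al.\ argument cited in the paper.
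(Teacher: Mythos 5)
Your proof is correct and matches what the paper intends: the paper does not prove \Cref{clm:causal_history} directly but defers to Keidar et al.\ (DAG-Rider), and your induction on round number grounded in the buffered-insertion invariant is precisely the standard DAG-Rider argument. Flagging the ``add-when-all-parents-present'' discipline as the load-bearing assumption --- and checking that it carries over because \sys only modifies leader selection and commit interpretation, not the vertex-delivery and DAG-insertion pipeline --- is exactly the right thing to verify.
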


\begin{clm}\label{clm:vertex_agreement}
    If an honest party $P_i$ adds a vertex $u$ to its $DAG_i$ , then 
eventually all honest parties add $u$ to their DAG.
\end{clm}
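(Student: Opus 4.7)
The plan is to combine the Agreement property of the Reliable Broadcast primitive used to disseminate vertices (cf.\ \Cref{def:rb}) with the causal-history invariant of \Cref{clm:causal_history}, and to close the argument by induction on $u.\textit{round}$.

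First I would note that in the underlying Bullshark-style DAG a party only admits a vertex $u$ to its local DAG after the corresponding $r\_deliver$ event has fired and the $n-f$ parents listed in $u.\textit{edges}$ are already present locally. Hence if $P_i$ adds $u$ to $DAG_i$, it must have output $r\_deliver_i(u, u.\textit{round}, u.\textit{source})$; by the Agreement property of Reliable Broadcast, every honest $P_j$ eventually outputs $r\_deliver_j(u, u.\textit{round}, u.\textit{source})$ as well.

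The remaining obligation is that the predecessors required for admission are also present at every honest party. For this I would induct on $u.\textit{round}$: the base case is trivial since the round-$0$ (genesis) vertices are hard-coded at every honest party. For the inductive step, \Cref{clm:causal_history} guarantees that the entire causal history of $u$ already lies in $DAG_i$; in particular each $w \in u.\textit{edges}$ satisfies $w.\textit{round} < u.\textit{round}$ and was itself added by $P_i$, so by the inductive hypothesis every honest $P_j$ eventually adds each such $w$ to $DAG_j$.

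The main subtlety, and the reason the authors defer to~\cite{keidar2022dagrider}, lies in the interleaving of these events at $P_j$: the $r\_deliver$ for $u$ may fire before all of its parents have arrived, so one must argue that $P_j$ buffers pending vertices and re-examines them whenever a new vertex is added. Once $P_j$ has both r-delivered $u$ and holds every vertex of $u.\textit{edges}$ in $DAG_j$, the admission predicate becomes permanently satisfied and $u$ is added on the next processing step, which completes the induction and the claim.
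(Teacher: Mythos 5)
Your proof is correct and matches the argument the paper implicitly invokes by citing DAG-Rider~\cite{keidar2022dagrider}: Reliable Broadcast Agreement propagates $u$ to every honest party, and an induction on $u.\textit{round}$ (grounded in the causal-history invariant of \Cref{clm:causal_history} and the fact that parties buffer r-delivered vertices until their parents arrive) guarantees every $w \in u.\textit{edges}$ is eventually present, at which point $u$ is admitted. The paper gives no explicit proof and simply defers to~\cite{keidar2022dagrider}, so your reconstruction is faithful to what the authors intend.
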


We can observe that from these two claims, any two honest parties who commit vertex $u$, will have the same causal history for $u$.
\begin{observation}\label{obs:same_causal}
    If an honest party $P_i$ adds a vertex $u$ to its $DAG_i$ , then every honest party will (i) commit $u$ and (ii) upon committing $u$ will have the same causal history for $u$ in its $DAG$.
\end{observation}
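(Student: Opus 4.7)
The plan is to derive \Cref{obs:same_causal} as a direct consequence of \Cref{clm:causal_history} and \Cref{clm:vertex_agreement}, together with the per-schedule Bullshark behaviour captured in \Cref{obs:bullshark_per_schedule}.

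For part (i), I would argue as follows. Let $P_i$ be an honest party that adds $u$ to $DAG_i$. By \Cref{clm:vertex_agreement}, every honest party $P_j$ eventually adds $u$ to $DAG_j$. Commitment of $u$ then follows from the standard Bullshark ordering mechanism: by \Cref{obs:bullshark_per_schedule}, inside any single schedule \sys behaves exactly like Bullshark, and Bullshark guarantees that every vertex that is present in an honest party's DAG is eventually contained in the causal history of some committed anchor, and thus ordered by the \textsc{orderAnchors}/\textsc{orderHistory} recursion in \Cref{alg:ESBullshark}. Hence every honest $P_j$ outputs $a\_deliver_j(u.\textit{block}, u.\textit{round}, u.\textit{source})$.

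For part (ii), I would use the fact that the causal history of $u$ is intrinsic to $u$: it is determined entirely by the $\textit{edges}$ field stored inside $u$ and, recursively, inside each of its ancestors, and these fields are fixed at the time the corresponding vertex is reliably broadcast. By \Cref{clm:causal_history}, at the moment any honest $P_j$ adds $u$ to $DAG_j$ (guaranteed by \Cref{clm:vertex_agreement}), the full causal history of $u$ is already in $DAG_j$. Since $u$ is a single object delivered through reliable broadcast, Agreement of reliable broadcast ensures every honest party sees exactly the same $u$ and, inductively, exactly the same ancestors. Therefore, when $P_j$ commits $u$, the set $\mathsf{verticesToOrder}$ computed by \textsc{orderHistory} is identical across honest parties, which is what part (ii) asserts.

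I do not anticipate a serious obstacle, as the observation is essentially a repackaging of the two preceding claims. The one point that deserves a brief sanity check is that the commitment guarantee in part (i) survives a schedule switch occurring between $u$ being added and $u$ being ordered. Because schedule changes in \textsc{orderHistory} are themselves triggered only at an observed commit boundary (the check on $\mathsf{activeSchedule.initialRound}+\mathsf{T}$), the anchor-commit recursion is never truncated in a way that would leave $u$ permanently unordered, so the per-schedule Bullshark argument invoked above continues to apply across the boundary.
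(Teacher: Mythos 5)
Your proof follows the same skeleton as the paper's — both derive the observation directly from Claims~\ref{clm:causal_history} and~\ref{clm:vertex_agreement} — but you supply reasoning that the paper leaves implicit. The paper's proof is two sentences: it cites Claim~\ref{clm:vertex_agreement} to get eventual commitment by all honest parties and Claim~\ref{clm:causal_history} for the shared causal history; notably, it slides from \emph{adding $u$ to the DAG} (what Claim~\ref{clm:vertex_agreement} actually delivers) to \emph{committing $u$} without comment. You explicitly close that gap by invoking \Cref{obs:bullshark_per_schedule} and Bullshark's ordering liveness to get from membership in every honest DAG to eventual commitment, and you also make explicit that the causal history of $u$ is intrinsic to the vertex (determined by its \textit{edges} fields and their ancestors, fixed at reliable-broadcast time), which is what really guarantees the histories agree rather than merely exist. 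The one place to be careful is your closing sanity check about schedule boundaries: a fully rigorous version of it would lean on schedule agreement (Proposition~\ref{proposition:schedule_agreement}), which is proven afterward and itself cites this very observation, so that argument must stay informal here to avoid circularity — the paper sidesteps the issue by not discussing it at all.
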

\begin{proof}
    From Claim~\ref{clm:vertex_agreement} if $P_i$ commits $u$, then every honest party $P_j$ eventually commits $u$. From Claim~\ref{clm:causal_history}, $P_j$ will have the entire causal history of $u$ in $DAG_j$ upon committing $u$. \hfill\qed
\end{proof}



\begin{proposition}[Schedule Agreement]\label{proposition:schedule_agreement}
    Assume that all honest validators eventually switch every schedule according to the schedule switch rule. Then, if an honest validator $p_i$ switches to schedule $S$, eventually every honest validator will switch to schedule $S$. 
\end{proposition}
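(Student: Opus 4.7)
The plan is to prove the proposition by strong induction on the schedule index, leveraging Observation~\ref{obs:bullshark_per_schedule} to inherit Bullshark's guarantees within a fixed schedule and Observation~\ref{obs:same_causal} to argue that the inputs to \textsc{updateSchedule} are identical across honest validators. For the base case, all honest validators deterministically agree on the initial schedule $S_0$ by construction (it is a fixed, stake-weighted round-robin computed from public information), and they agree on its initial round.

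For the inductive step, suppose that all honest validators agree on the sequence of schedules $S_0, S_1, \ldots, S_k$ together with their respective values of \textsf{initialRound}. I first argue that, while operating under $S_k$, every honest validator commits exactly the same sequence of anchors: by Observation~\ref{obs:bullshark_per_schedule}, within schedule $S_k$ the protocol behaves as eventually synchronous Bullshark, so the ordered sequence of committed anchors is deterministic and agreed upon; furthermore, \textsc{getAnchor} invokes the deterministic \textsc{getLeader}$(\cdot, \mathsf{activeSchedule})$, which produces the same candidate leader per round since $\mathsf{activeSchedule}$ is identical by the inductive hypothesis. The schedule switch is triggered inside \textsc{orderHistory} by the \emph{first} popped anchor whose round is at least $S_k.\mathsf{initialRound} + \mathsf{T}$; since the stack is filled deterministically from the agreed-upon chain of anchors (via \textsc{orderAnchors}), this trigger anchor $v^\star$ is the same for every honest validator.

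Next, I invoke Observation~\ref{obs:same_causal} on $v^\star$: every honest validator, upon committing $v^\star$, has the identical causal history of $v^\star$ in its local DAG. Since the score update rule in \textsc{updateSchedule} is a deterministic function of (i) $S_k$, (ii) the interval of rounds from $S_k.\mathsf{initialRound}$ to $v^\star.\mathsf{round}$, and (iii) the parent edges inside the causal history of $v^\star$ (each vote for the leader of the previous round contributing one point), all honest validators compute identical $\mathsf{scores}(\cdot)$ tables. Because the sets $B$ and $G$ and the replacement procedure are specified to break ties deterministically and the new $\mathsf{initialRound}$ is derived from $v^\star.\mathsf{round}$, the output $S_{k+1}$ is the same bitwise object at every honest validator. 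Combined with the hypothesis that every honest validator eventually performs the switch, this completes the induction and yields the proposition.

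I expect the main obstacle to be rigorously pinning down that the trigger anchor $v^\star$ is identified identically by all honest validators, not merely committed eventually. The subtlety is that \textsc{orderHistory} can push several anchors onto the stack in a single invocation, and a validator that commits late might, in principle, bundle more anchors together than a faster validator; what saves us is that the schedule switch takes effect only upon popping an anchor whose round crosses the threshold, and the popping order is a deterministic function of the (identical) chain of committed anchors under $S_k$. A careful argument should therefore separately treat the DAG content (agreed upon via Observation~\ref{obs:same_causal}) and the timing of the switch (irrelevant for which anchor is the trigger, only relevant for when the switch happens locally).
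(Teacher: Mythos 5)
Your proof is correct and follows essentially the same route as the paper's: strong induction on the schedule index, Observation~\ref{obs:bullshark_per_schedule} to get Bullshark-style agreement on the committed anchor sequence within a fixed schedule, and Observation~\ref{obs:same_causal} to conclude that the causal history—and hence the deterministic score computation—is identical across honest validators when the trigger anchor is ordered.

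The one place you diverge slightly is how you handle the step where a lagging validator $P_j$ actually reaches the trigger anchor $v^\star$. You lean on the proposition's hypothesis that every honest validator eventually performs a schedule switch, plus the determinism of the popping order, to conclude $P_j$ picks the same trigger. The paper instead unpacks this directly: it picks a later anchor $a_{r'_j}$ that $P_j$ commits, applies quorum intersection to get a path from $a_{r'_j}$ to $P_i$'s trigger anchor $a_{r_i}$, and so concludes $P_j$ orders $a_{r_i}$ on the way down its recursion. Your version is slightly more modular (determinism of the trigger choice vs.\ eventuality of the commit); the paper's version re-derives the eventuality from quorum intersection rather than citing the assumption, which makes the mechanism more explicit but is arguably redundant given the hypothesis. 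Both are sound, and your explicit unpacking of why \textsc{updateSchedule} is a deterministic function of $(S_k, v^\star.\mathsf{round},$ causal history$)$ is a bit more careful than the paper's brief appeal to Observation~\ref{obs:same_causal}.
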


\begin{proof}
    Via strong induction. Base case: From $S_0$ to $S_1$; \\
    Let $S_0$ be the very first schedule of the epoch and assume that $P_i$ is the first honest validator who switches from $S_0$ to say $S_1$.
    According to Alg.~\ref{alg:ESBullshark}, $P_i$ must have committed some anchor for round $r_i\ge \mathsf{T}$, else the triggering of schedule switch would not occur.
    Say that another honest validator $P_j$ has so far committed up to round $r_j$, then $r_j<r$. If $r_j\ge r_i$, then according to Alg.~\ref{alg:ESBullshark} $P_j$ would have switched to the next schedule by $r_i$, which is also $S_1$ according to the view of $P_j$, from Observation~\ref{obs:same_causal} up to $r_i$.
    $P_j$ will commit some anchor $a_{r'_j}$ for some round $r'_j>r_i$. Then, from quorum intersection, since $P_i$ committed anchor say $a_{r_i}$ in round $r_i$, there will be a path from $a_{r'_j}$ to $a_{r_i}$. So, $P_j$ will order $a_{r_i}$, meaning that $P_j$ will switch schedules and from Observation~\ref{obs:same_causal}, it will switch to $S_1$.\\

    \noindent Assume that the statement holds for all schedules from $S_0$ up to $S_{k}$. We prove that this holds also for $S_{k+1}$.\\
    Let $P_i$ be the first honest validator who switches from $S_k$ to $S_{k+1}$.
    Then, for each other honest validator, who is in some schedule $S_r: r<k+1$, we can use the induction hypothesis, which means that each will switch to $S_k$ at some point. 
    According to Alg.~\ref{alg:ESBullshark}, $P_i$ must have committed some anchor, say $a_{r_i}$ for round $r_i\ge \mathsf{T}+ S_k.\mathsf{initialRound}$. 
    Say that another honest validator $P_j$ has so far committed up to round $r_j$, then $r_j<r_i$, for the same reason as in the base case.
    Eventually, $P_j$ will switch to $S_k$ and after that, $P_j$ will commit some anchor $a_{r'_j}$ for some round $r'_j>r_i$. Then, from quorum intersection, since $P_i$ committed $a_{r_i}$ in round $r_i$, there will be a path from $a_{r'_j}$ to $a_{r_i}$. So, $P_j$ will order $a_{r_i}$, which means that $P_j$ will switch schedules and, from Observation~\ref{obs:same_causal}, it will switch to $S_{k+1}$.
   \hfill\qed
\end{proof}


\begin{clm}
    Let $t$ be some timestep after $\GST$. If an honest party reliably brodacasts (or delivers) a message $m$ at time $t$, then all honest parties deliver $m$ by time $t+\Delta$.
\end{clm}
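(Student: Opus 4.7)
My plan is to prove the claim by a case analysis on whether the triggering event at time $t$ is a call to $r\_bcast$ or an output of $r\_deliver$, and to combine the partial synchrony assumption (every message sent after $\GST$ arrives within $\Delta$) with the internal structure of the reliable broadcast primitive. The proof obligation amounts to turning the ``eventually'' clauses in Agreement and Validity of \Cref{def:rb} into concrete $\Delta$-bounds once $t \geq \GST$.

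In the broadcast case, the honest sender contacts every party with the initial message at time $t$; by synchrony, each honest party receives it by $t+\Delta$. The RB protocol then proceeds through its bounded number of echo/ready exchanges, each of which consists of one all-to-all round that completes within $\Delta$ after its messages are emitted, so every honest party delivers $m$ within a fixed number of $\Delta$ steps from $t$. In the delivery case, the fact that an honest party $P_i$ outputs $r\_deliver_i(m,r,k)$ at time $t$ means its local deliver condition (e.g., a quorum of \textsf{ready} messages, in a Bracha-style RB) is met. By quorum intersection, at least $f+1$ honest parties had already sent the critical \textsf{ready}/\textsf{echo} messages strictly before $t$; after $\GST$ these arrive at every honest party by $t+\Delta$, which is enough to trigger each honest party's own deliver condition and output $r\_deliver(m,r,k)$.

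The main obstacle is that the statement quotes a single $\Delta$, while the paper's model defines $\Delta$ as a one-hop delay; a line-by-line accounting of any Bracha-style RB naturally spends a small constant number of $\Delta$ steps. I would handle this either by rescaling $\Delta$ to denote the aggregate RB-layer delay (a common convention in the DAG-consensus literature, used without comment in Bullshark and DAG-Rider), or by committing to a specific RB instantiation and showing directly that $P_i$'s delivery at $t$ guarantees the in-flight \textsf{ready} messages alone suffice to push every honest party across the deliver threshold within one additional $\Delta$. Either route yields the claim, and both preserve the intended usage, which is that after $\GST$ reliable broadcast behaves essentially synchronously.
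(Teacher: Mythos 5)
Your proposal takes a genuinely different, and considerably heavier, route than the paper. The paper's entire proof is one sentence: the reliable broadcast used by Bullshark (a certificate-based RB, not Bracha) has the property that an honest party multicasts the message \emph{before} delivering it, and a party receiving that single multicast can deliver immediately. Hence if $P_i$ delivers at $t \geq \GST$, the forwarded message was sent no later than $t$, so every honest party receives and delivers it by $t+\Delta$ --- a clean one-hop argument. Your case analysis and quorum-intersection reasoning are not needed once you commit to the right RB instantiation.

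That said, your proposal also has a concrete gap in the deliver case as written. You claim that the $f+1$ honest \textsf{ready}/\textsf{echo} messages guaranteed by quorum intersection are ``enough to trigger each honest party's own deliver condition.'' In a Bracha-style RB this is false: $f+1$ \textsf{ready} messages only let a party \emph{amplify} (send its own \textsf{ready}); delivery requires $2f+1$ \textsf{ready} messages. So after the $f+1$ in-flight readies arrive by $t+\Delta$, each honest party sends its own \textsf{ready}, and only by $t+2\Delta$ do all honest parties cross the $2f+1$ threshold. Your proposed ``second option'' therefore does not rescue the single-$\Delta$ bound under Bracha. You do correctly flag the $\Delta$-accounting problem, but the resolution the paper actually uses is neither of your two options: it is to exploit that the underlying RB is the signature-certificate scheme from Narwhal/Bullshark, where a delivered message is self-certifying and a single forwarding hop suffices for everyone else to deliver.
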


As also explained in~\cite{spiegelman2022bullshark}, this is satisfied, since before delivering the message, any honest party would multicast it to all other parties.

\begin{lemma}[View Synchronization]\label{lemma:view_synchronization}
    Let $t_{sync}= \GST + \Delta$. Let $S_{\max}$ be the latest schedule any honest party has advanced to before $\GST$. Then, by time $t_{sync}$, all honest parties can advance up to schedule $S_{\max}$.
\end{lemma}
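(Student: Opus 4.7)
The plan is to show that by $t_{sync}$, every honest party has delivered all vertices needed to replay the sequence of schedule switches that brought some honest party $P^*$ up to $S_{\max}$ by $\GST$. First, I fix such a $P^*$. By \Cref{proposition:schedule_agreement} and the logic of \textsc{orderHistory}/\textsc{updateSchedule} in \Cref{alg:ESBullshark}, $P^*$ reached $S_{\max}$ by committing a sequence of anchors $a_1, \ldots, a_k$, one per schedule change. By \Cref{clm:causal_history}, at time $\GST$ party $P^*$ already holds in its local DAG the entire causal history of each $a_j$ together with the $f+1$ supporting vertices used in \textsc{TryCommitting} to order them.

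Next, I invoke the partial-synchrony model together with the reliable-broadcast primitive: any vertex $v$ that $P^*$ has delivered is one that some honest party multicast at some time $x$, and the partial-synchrony guarantee bounds its arrival at every honest party by $\Delta + \max\{\GST, x\}$. For every $v$ in the causal history of an $a_j$ that $P^*$ holds by $\GST$, the corresponding multicast time satisfies $x \le \GST$, so $v$ reaches every honest party no later than $\GST + \Delta = t_{sync}$. Hence, by $t_{sync}$, every honest party's DAG contains all vertices that $P^*$ used to commit $a_1, \ldots, a_k$.

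The conclusion then follows from \Cref{obs:same_causal} together with the fact that the anchor-commit rule in \textsc{TryCommitting}/\textsc{orderAnchors} and the schedule update in \textsc{updateSchedule} are deterministic functions of the DAG prefix: each honest party can replay $P^*$'s commits $a_1, \ldots, a_k$ and the corresponding schedule switches in sequence, thereby advancing to $S_{\max}$. The main obstacle I anticipate is that the preceding Claim is stated only for deliveries at times $t \ge \GST$, whereas the relevant multicasts by $P^*$ (and by the other honest producers of the vertices it uses) may have occurred strictly before $\GST$. Bridging this gap requires going back to the partial-synchrony bound $\Delta + \max\{\GST, x\}$ rather than the claim itself, and observing that in the reliable-broadcast construction every delivered vertex has indeed already been multicast by an honest party at or before the time of $P^*$'s delivery, so that bound applies uniformly to all vertices $P^*$ holds at $\GST$.
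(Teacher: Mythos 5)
Your proof is correct and follows essentially the same route as the paper's: by $t_{sync}$ every honest party holds the pre-$\GST$ vertices (with their causal histories per Claim~\ref{clm:causal_history}) that triggered $P^*$'s commits, and determinism of the commit and schedule-update rules lets every party replay those switches up to $S_{\max}$. The paper's version is terser but relies on the same facts; the extra care you take about when the $\Delta+\max\{\GST,x\}$ bound applies (via the reliable-broadcast implementation's multicast-before-deliver behavior, which the paper also appeals to just before this lemma) is a useful explicitness rather than a different argument.
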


\begin{proof}
    By time $t_{sync}$ all parties deliver all pre-$\GST$ messages. From Claim~\ref{clm:causal_history} and the fact that some honest party switched to schedule $S_{\max}$ before $\GST$, it is guaranteed that the causal histories of (\dots the anchors that upon commit, force switching to\dots) schedule $S_{\max}$ and all the intermediate schedules, are in $DAG_i$ for every honest party $P_i$.
\end{proof}

\begin{lemma}[View Distance]\label{lemma:view_distance}
After $\GST$, if an honest party enters schedule $S$ then all honest parties will be at some schedule $S' \geq S$ within $\Delta$ time.
\end{lemma}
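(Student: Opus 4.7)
The plan is to reduce the lemma to two ingredients: (i) the post-$\GST$ delivery claim stated just before Lemma~\ref{lemma:view_synchronization}, and (ii) the deterministic replay argument already implicit in the proof of Proposition~\ref{proposition:schedule_agreement}. Let $t \geq \GST$ be the time at which honest $p_i$ first enters schedule $S$ by executing \textsc{updateSchedule} inside \textsc{orderHistory}. This step is triggered by committing some anchor $a$ with $a.\mathsf{round} \geq \mathsf{activeSchedule}.\mathsf{initialRound}+\mathsf{T}$, so the prerequisites for the commit are already present in $p_i$'s local DAG at time $t$: the anchor $a$ itself, at least $f+1$ vertices in round $a.\mathsf{round}+2$ with paths to $a$, and, by Claim~\ref{clm:causal_history}, the entire causal history of $a$. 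Because \sys advances the schedule one step at a time inside the \textsc{orderHistory} loop, the entire chain $a_1,\dots,a_k=a$ of anchors that carried $p_i$ through the intermediate schedules up to $S$ has already been committed by $p_i$, and all of these anchors together with their supporting quorums sit in $p_i$'s DAG at time $t$.

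Next I would apply the delivery claim: any vertex held by $p_i$ at time $t \geq \GST$ reaches every honest $p_j$ by time $t+\Delta$. Hence, by $t+\Delta$, the anchors $a_1,\dots,a_k$ and all their supporting quorums are in every honest party's DAG. At that moment, the \textsc{TryCommitting} invocations at $p_j$ set off a \textsc{orderAnchors}/\textsc{orderHistory} cascade. Using essentially the same reasoning as in the inductive step of Proposition~\ref{proposition:schedule_agreement}, together with Observation~\ref{obs:same_causal}, $p_j$ commits the anchors $a_\ell$ in the same order as $p_i$ did, each time invoking \textsc{updateSchedule} and advancing its schedule by exactly one step. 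Either $p_j$ reaches $a_k$ and sits at $S$, or it has independently already committed later anchors and sits at some $S' > S$; in either case $S' \geq S$.

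The main obstacle, and the only step that needs more than a citation, is verifying that when $p_j$'s \textsc{orderHistory} processes anchor $a_\ell$ its current $\mathsf{activeSchedule}$ coincides with the one $p_i$ used at the corresponding point, so that the deterministic \textsc{getLeader} yields the same vertex in round $a_\ell.\mathsf{round}$. This is exactly the invariant enforced by placing \textsc{updateSchedule} inside the \textsc{orderHistory} loop---schedule transitions happen in lock-step with anchor commits---and I would make it explicit by induction on $\ell$, taking particular care of a $p_j$ that is several schedules behind at time $t$ and must cascade through them before reaching $S$. Once this invariant is pinned down, no network or timeout-based waiting is added by the local commit cascade, so the bound $t+\Delta$ follows directly from the delivery claim.
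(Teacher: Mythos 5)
Your proposal is correct and follows essentially the same approach as the paper's proof, which is a compressed version of the argument you lay out: the paper appeals to reliable broadcast forwarding, the $\Delta$ delivery bound, and the dichotomy ``either $p_j$ is already at $S' \geq S$ or it enters $S$''. What you add, and the paper omits, is the mechanism by which delivery of the relevant vertices actually causes $p_j$ to enter $S$: you trace the \textsc{TryCommitting} $\to$ \textsc{orderAnchors} $\to$ \textsc{orderHistory} cascade and isolate the key determinism invariant---that $\mathsf{activeSchedule}$ advances in lockstep with anchor commits inside \textsc{orderHistory}, so \textsc{getLeader} agrees across parties at each step---as the one piece that needs a genuine inductive check (which you correctly observe overlaps with the argument implicit in Proposition~\ref{proposition:schedule_agreement}). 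This is useful rigor the paper does not supply. One small slip: the $f+1$ supporting vertices with paths to anchor $a$ are the \emph{edges} of the vertex $v$ at round $a.\mathsf{round}+2$, and those edges sit in round $a.\mathsf{round}+1$, not round $a.\mathsf{round}+2$; this does not affect the structure of your argument.
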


\begin{proof}
  From reliable broadcast, if an honest party delivers sufficient messages to enter schedule $S$ it will broadcast this information to all honest parties, let's say wlog at time $t$ and t is after $\GST$. These messages will be delivered by all honest parties, the latest at $t+\Delta$. 
  An honest party will either ignore the messages because it is already at $S' \geq S$ or enter $S$.
\end{proof}

Now we will show Liveness in two cases. First we assume that there is no adversarial behaviour and show that honest parties will move from $S$ to $S+1$ in a bounded number of steps after GST. Then we will show that the only way for the adversary to prevent all parties from collectively advancing schedules is to keep some honest parties ahead. However, to keep those parties ahead the adversary will need to keep advancing schedules, providing Liveness as well.

\begin{lemma}[Schedule switch]\label{lemma:schedule_switch}
 Let $S$ be a schedule. After $\GST$, if all honest parties are in schedule $S$ after round $S.\mathsf{initialRound}+\mathsf{T}$, then all honest parties will switch to the next schedule.
\end{lemma}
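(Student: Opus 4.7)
The plan is to reduce this statement to the post-$\GST$ liveness of the baseline Bullshark protocol via Observation~\ref{obs:bullshark_per_schedule}. The hypothesis tells us that all honest parties share schedule $S$ beyond round $S.\mathsf{initialRound}+\mathsf{T}$, which means they agree on the identity of \textsc{getLeader}$(r,\mathsf{activeSchedule})$ for every relevant round $r$. By Observation~\ref{obs:bullshark_per_schedule}, the protocol restricted to this window runs exactly like eventually-synchronous Bullshark, so standard Bullshark liveness reasoning applies verbatim.

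First, I would argue that eventually some anchor at a round $r^{\star}\ge S.\mathsf{initialRound}+\mathsf{T}$ is committed by every honest party. After $\GST$ all honest parties produce vertices in every round and deliver each other's vertices within $\Delta$. Pick any honest round $r^{\star}\ge S.\mathsf{initialRound}+\mathsf{T}$ whose leader (under the agreed-upon $S$) is honest; the leader's vertex is delivered to all honest parties in time for them to extend it in round $r^{\star}+1$, so at least $2f+1$ round-$(r^{\star}+1)$ vertices have a direct edge to the anchor. From quorum intersection the $2f+1$ round-$(r^{\star}+2)$ vertices must contain at least $f+1$ vertices with a path to the anchor, so the commit condition in \textsc{TryCommitting} fires at every honest party.

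Next, I would trace the control flow through \textsc{orderAnchors} and \textsc{orderHistory} to show that this commit actually triggers the schedule switch. The anchor found in \textsc{TryCommitting} is pushed onto $\mathsf{orderedAnchorsStack}$, and \textsc{orderAnchors} then pushes any intermediate anchors reachable via the recursion. When \textsc{orderHistory} pops the anchor at round $r^{\star}$, the guard $t=\mathsf{activeSchedule.initialRound}+\mathsf{T}\le \mathsf{anchor.round}$ holds (since $r^{\star}\ge S.\mathsf{initialRound}+\mathsf{T}$), so \textsc{updateSchedule} is invoked and the party transitions out of $S$.

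The main obstacle is justifying the first step: committing \emph{some} anchor whose round lies past the threshold. This is not immediate from generic Bullshark liveness because a priori all committed anchors could lie at rounds strictly below $S.\mathsf{initialRound}+\mathsf{T}$. However, Bullshark's post-$\GST$ argument actually guarantees an infinite stream of commits (every wave with an honest, well-voted leader commits), so in particular infinitely many commits occur at rounds above any fixed bound; combined with the schedule agreement hypothesis, this is enough to guarantee a commit at some $r^{\star}\ge S.\mathsf{initialRound}+\mathsf{T}$ at every honest party, completing the argument.
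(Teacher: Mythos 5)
Your proof is correct but follows a genuinely different route from the paper's. The paper's proof leans heavily on Lemma~\ref{lemma:view_distance} (View Distance) as a black box: it argues that after $\GST$ all honest parties are at most $\Delta$ apart, then does a dichotomy on the parties that first reach round $S.\mathsf{initialRound}+\mathsf{T}$ --- either they commit an anchor and by View Distance everyone else switches within $\Delta$, or they stall, in which case the laggards catch up within $\Delta$ and then everyone commits together. Your argument instead reopens the Bullshark black box via Observation~\ref{obs:bullshark_per_schedule}: you pick a round past the threshold with an honest leader, use the standard $2f{+}1 / f{+}1$ quorum-intersection argument to show \textsc{TryCommitting} fires, and then trace the control flow through \textsc{orderAnchors}/\textsc{orderHistory} to verify that \textsc{updateSchedule} is actually invoked; the last step --- confirming that the commit's round passes the guard in \textsc{orderHistory} --- is a detail the paper silently elides. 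Your route is more self-contained and mechanistic (it does not need View Distance at all for this lemma), which buys you clarity about where in the algorithm the switch happens; the paper's route is shorter and reuses already-proven machinery. Both approaches gloss over the same Bullshark timer subtlety (an honest leader's vertex must arrive before the round-$(r^{\star}{+}1)$ timer expires), and both need the underlying claim that infinitely many rounds past any bound eventually commit, which you state explicitly and the paper leaves implicit in ``they will all be able to commit.''
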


\begin{proof}
    After $\GST$ honest parties are at most $\Delta$ away from each other (Lemma~\ref{lemma:view_distance}). Also, all parties are at schedule $S$. Then, within a bounded amount of time, the parties who are ahead, will be in round $\ge S.\mathsf{initialRound}+\mathsf{T}$. They either commit the new anchor and switch schedules, or they cannot. 
    If they switch, then by Lemma~\ref{lemma:view_distance} all honest parties will switch within $\Delta$. Else, within $\Delta$ all honest parties will be caught up and they will all be able to commit, so they all switch together. 
\end{proof}

\begin{lemma}[Liveness]\label{lemma:Liveness_responsive}
 Let $S_{\max}$ be the latest schedule any honest party has advanced to before $\GST$. Within a bounded number of steps some honest party will enter $S'=S_{\max}+1$
\end{lemma}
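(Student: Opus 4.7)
The plan is to chain three previously established ingredients: \Cref{lemma:view_synchronization} (View Synchronization), \Cref{obs:bullshark_per_schedule} (per-schedule reduction to eventually synchronous Bullshark), and \Cref{lemma:schedule_switch} (Schedule Switch). The high-level idea is to first bring every honest party up to schedule $S_{\max}$, then use Bullshark's post-$\GST$ liveness inside $S_{\max}$ to commit enough anchors to cross the $\mathsf{T}$-round threshold, and finally invoke the Schedule Switch lemma to conclude that some honest party enters $S_{\max}+1$.

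First, I would apply \Cref{lemma:view_synchronization} at time $t_{\mathrm{sync}}=\GST+\Delta$: by then, every honest party has delivered the DAG vertices necessary to advance up to $S_{\max}$. By the construction of \Cref{alg:ESBullshark}, a party whose local DAG contains the anchors that triggered each intermediate switch $S_0\to S_1\to\cdots\to S_{\max}$ will process them through \textsc{orderHistory} and retroactively invoke \textsc{updateSchedule} for each one; \Cref{proposition:schedule_agreement} guarantees that the traversed sequence of schedules matches the one of the party that first reached $S_{\max}$. Hence within a bounded additional amount of time every honest party is actually operating in $S_{\max}$.

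Next, with all honest parties sharing schedule $S_{\max}$, \Cref{obs:bullshark_per_schedule} lets me treat the restriction of \sys to this schedule as eventually synchronous Bullshark. Post-$\GST$ liveness of Bullshark, together with $\Delta$-bounded message delivery, then guarantees that within a bounded number of additional rounds some honest party commits an anchor at round $r\ge S_{\max}.\mathsf{initialRound}+\mathsf{T}$. At that point \Cref{lemma:schedule_switch} applies and yields that some (in fact every) honest party switches to $S_{\max}+1$ within bounded time.

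The hardest step will be the first one. \Cref{lemma:view_synchronization} as stated only asserts that honest parties \emph{can} advance to $S_{\max}$; I must upgrade this to the stronger claim that they actually do so in bounded time. The saving grace is that once the required DAG vertices are delivered, catch-up is a purely local, deterministic computation driven by the retroactive \textsc{updateSchedule} loop and does not wait on any further network event, so the number of local steps needed is bounded by the number of schedule switches along $S_0\to\cdots\to S_{\max}$, which is finite because each switch is separated by at least $\mathsf{T}$ rounds that were committed before $\GST$.
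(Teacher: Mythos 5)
Your proof is correct and takes essentially the same route as the paper's: bring everyone to $S_{\max}$ via \Cref{lemma:view_synchronization}, use \Cref{obs:bullshark_per_schedule} to invoke Bullshark's post-$\GST$ liveness inside $S_{\max}$, and close with \Cref{lemma:schedule_switch}.

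Two differences are worth flagging. First, the paper's proof explicitly splits into two cases---either some honest party has already moved to $S_{\max}+1$ (in which case it invokes \Cref{lemma:view_distance}, though really the lemma is already satisfied at that point), or nobody has, so all honest parties remain at $S_{\max}$ and Bullshark liveness applies. You fold this into a single linear argument. Your reasoning is morally fine, but the case split is not pure decoration: \Cref{lemma:schedule_switch} literally requires that \emph{all} honest parties stay in schedule $S$ past round $S.\mathsf{initialRound}+\mathsf{T}$, and likewise \Cref{obs:bullshark_per_schedule} only reduces to Bullshark liveness when all honest parties agree on the schedule. If some party jumps ahead to $S_{\max}+1$ while you are waiting for Bullshark to accumulate $\mathsf{T}$ rounds, the precondition of your Schedule-Switch invocation no longer holds---the conclusion you want is of course trivially true in that branch, but you should say so rather than proceed as if the precondition always holds. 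Second, you correctly notice and repair the ``can advance'' versus ``will advance'' gap in \Cref{lemma:view_synchronization}'s statement by observing that catching up through the retroactive \textsc{updateSchedule} loop is a purely local deterministic computation once the causal history is delivered; the paper's own proof silently promotes ``can advance'' to ``will be at $S_{\max}$,'' so your version is actually a bit more careful here.
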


\begin{proof}
From view synchronization, every honest party will be at $S_{\max}$ at $\GST + \Delta$. Now there are two cases. First, if some honest party moves to $S_{\max}+1$ then by view distance all honest parties will move to $S_{\max}+1$ within $\Delta$. So Liveness is proven. Else all honest parties will be at $S_{\max}$ and from Bullshark Liveness will succesfully advance $\ge \mathsf{T}$ rounds. Thus, from Schedule switch, they will all switch to schedule $S_{\max}+1$. 
\end{proof}

\begin{lemma}[HammerHead BAB]\label{lemma:Safety_responsive} \sys satisfies Byzantine Atomic Broadcast per definition~\ref{def:bab}.
\end{lemma}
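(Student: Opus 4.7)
The plan is to establish each of the four BAB properties --- Agreement, Integrity, Validity, and Total Order --- by combining Observation~\ref{obs:bullshark_per_schedule}, which reduces \sys's behaviour within a single schedule to that of Bullshark, with Proposition~\ref{proposition:schedule_agreement} to stitch the argument across schedule boundaries. Observation~\ref{obs:same_causal} together with the liveness lemmas will be used to upgrade ``eventual'' statements into genuine ordering guarantees.

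First I would dispatch Integrity and Validity. Integrity is a direct consequence of the bookkeeping in \textsc{orderHistory}: only vertices not already in $\mathsf{orderedVertices}$ are emitted, so each $(m,r,k)$ triggers $a\_deliver$ at most once. For Validity, if an honest $P_k$ calls $a\_bcast(m,r)$, the reliable broadcast layer guarantees that the corresponding vertex enters the DAG of every honest party (Claim~\ref{clm:vertex_agreement}); by Lemma~\ref{lemma:Liveness_responsive} some anchor $a$ at a later round is eventually committed, its causal history contains the vertex by Claim~\ref{clm:causal_history}, and therefore $a\_deliver_i(m,r,k)$ fires at every honest $P_i$.

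For Agreement and Total Order I would argue as follows. By Proposition~\ref{proposition:schedule_agreement}, if any honest validator advances to schedule $S$, all honest validators eventually do so; combined with the fact that each schedule is applied over the fixed range $[S.\mathsf{initialRound}, S.\mathsf{initialRound}+\mathsf{T})$ determined by the previous commit, all honest parties eventually use the same schedule to compute $\textsc{getLeader}(r, \cdot)$ for every round $r$ and hence identify the same anchor. Once the schedules agree, the commit rule and the recursive \textsc{orderAnchors} procedure are deterministic functions of the DAG, and by Observation~\ref{obs:same_causal} all honest parties eventually hold the same causal history for each committed anchor. Therefore they push exactly the same sequence of anchors onto $\mathsf{orderedAnchorsStack}$ and flatten each anchor's causal history in the same deterministic order, yielding both Agreement and Total Order.

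The main obstacle is the retroactive nature of schedule switching: a slow validator $P_j$ might commit an anchor $a$ while still operating under an older schedule, whereas a faster $P_i$ committed the same $a$ already under the new schedule. The careful point is that no honest party must ever emit an $a\_deliver$ under a schedule that a later commit would invalidate. I would verify this by strong induction on the schedule index, parallel to the proof of Proposition~\ref{proposition:schedule_agreement}: the base case is Observation~\ref{obs:bullshark_per_schedule}, and in the inductive step the ``$t \le \mathsf{anchor.round}$'' guard in \textsc{orderHistory} ensures that a validator stops the flattening loop at the schedule-change boundary and invokes \textsc{updateSchedule} \emph{before} ordering any vertex that belongs to the next schedule's regime. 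Combined with the fact that \textsc{updateSchedule} is deterministic in the causal history of the triggering anchor (Observation~\ref{obs:same_causal}), this guarantees that every honest party recomputes identical schedules in identical rounds and therefore emits an identical sequence of $a\_deliver$ events, completing the proof.
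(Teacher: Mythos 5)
Your proof takes essentially the same approach as the paper's, which simply cites Schedule Agreement (Proposition~\ref{proposition:schedule_agreement}), Liveness (Lemma~\ref{lemma:Liveness_responsive}), and Observation~\ref{obs:bullshark_per_schedule}; you have expanded that one-line appeal into a property-by-property argument, which is useful detail but not a different route. One small caveat worth tightening: in your Validity step, Claim~\ref{clm:causal_history} only guarantees that a vertex's causal history is present \emph{when} the vertex is added to a DAG; it does not by itself imply that an honestly broadcast vertex will eventually appear in the causal history of some committed anchor. That inclusion is an inherited Bullshark commitment property, so it should be cited through Observation~\ref{obs:bullshark_per_schedule} (together with Schedule Agreement to make the schedule arguments consistent) rather than derived directly from Claims~\ref{clm:causal_history}--\ref{clm:vertex_agreement}.
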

\begin{proof}
    Directly from Schedule Agreement, Liveness and Observation~\ref{obs:bullshark_per_schedule}.
\end{proof}

\begin{lemma}[Leader Utilization] \sys satisfies Leader Utilization per definition~\ref{def:leader_util}. Specifically, the number of rounds $r$ for which no honest party commits a vertex formed in $r$ is bounded by $O(\mathsf{T})\cdot f$.
\end{lemma}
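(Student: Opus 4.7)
The plan is to reduce Leader Utilization to two post-$\GST$ facts about \sys: (i) within bounded time all honest parties converge to a common schedule that permanently excludes every crashed validator, and (ii) once such a schedule is in force, every leader is committed and its causal history covers all rounds back to the previous anchor, so no round is left without a committed vertex.

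First I would invoke Lemma~\ref{lemma:view_synchronization} to obtain that by time $\GST + \Delta$ every honest party shares a schedule $S^\star$. By Lemmas~\ref{lemma:view_distance}, \ref{lemma:schedule_switch}, and \ref{lemma:Liveness_responsive}, the epoch of $S^\star$ completes within $O(\mathsf{T})$ rounds and some honest party advances to the successor schedule $S^{\star\star}$; by Proposition~\ref{proposition:schedule_agreement} every other honest party eventually adopts $S^{\star\star}$.

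The key step is to argue that $S^{\star\star}$ contains no crashed validator as a leader. A crashed validator broadcasts no vertex after its crash and therefore never appears as a parent vote in any committed sub-DAG, so its reputation score over the epoch of $S^\star$ is $0$. Any honest validator contributes a parent edge whenever it observes a leader's vertex in time after $\GST$ and thus accumulates strictly positive score. Consequently the set $B$ of the $f$ lowest-scoring validators contains exactly the crashed ones, and the update rule replaces them in $S^{\star\star}$. From $S^{\star\star}$ onwards every scheduled leader is honest and, by Observation~\ref{obs:bullshark_per_schedule} together with post-$\GST$ Bullshark liveness, is committed. The $n-f \geq 2f+1$ in-degree of each vertex then ensures that the causal history of each committed anchor reaches every round back to the previous anchor, so every such round contains a committed vertex.

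It remains to bound the uncommitted rounds between $\GST$ and the installation of $S^{\star\star}$. The epoch of $S^\star$ lasts $O(\mathsf{T})$ rounds, and within it each of the at most $f$ crashed validators occupies a scheduled leader slot a bounded number of times; only rounds that are both leader slots of crashed validators and not subsequently covered by a committed anchor's causal history contribute to the count. A direct charging of such rounds to crashed leader occurrences yields the claimed $O(\mathsf{T}) \cdot f$ bound. The main obstacle is precisely this last accounting: one has to rule out that a long run of consecutive crashed leaders within the epoch produces uncovered rounds that are never back-ordered. This is handled by noting that in a crash-only execution each schedule contains at least $\mathsf{T}/2 - f$ honest leaders, each of which, after $\GST$, is committed and triggers the recursive back-ordering of every vertex in its causal history, absorbing the interleaved crashed slots into the count up to the stated bound.
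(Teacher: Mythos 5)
Your proof is built from the same core idea the paper uses: a crashed validator casts no votes, therefore accumulates zero reputation, therefore lands in the set $B$ at the next schedule computation and stays out of $G$ for as long as it is crashed. You also add, correctly, the view-synchronization and epoch-length bookkeeping that the paper leaves implicit.

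There is, however, a genuine gap in the final accounting. You pin the analysis to a single transient window ending when $S^{\star\star}$ is installed and then assert that ``from $S^{\star\star}$ onwards every scheduled leader is honest.'' That only holds if every crash has already occurred before the score window for $S^{\star\star}$ closed. The model has an adaptive adversary, so a validator can crash \emph{after} $S^{\star\star}$ is computed, or late enough in $S^\star$'s epoch to still carry a high score into $S^{\star\star}$; in either case $S^{\star\star}$ still schedules it as leader, and your bound does not account for the resulting bad rounds. The paper's proof, terse as it is, charges per crash rather than per epoch: for each validator, count the $O(\mathsf{T})$ rounds between the moment it crashes and the first subsequent schedule change that evicts it; summing over at most $f$ crashes gives $O(\mathsf{T})\cdot f$. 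To repair your argument you would need to repeat your ``one transient epoch plus one clean schedule'' reasoning per crash event (there are at most $f$), rather than once globally. Two smaller imprecisions: $B$ need not consist ``exactly'' of the crashed validators---if fewer than $f$ have crashed, $B$ can include low-scoring honest parties, which is harmless but not ``exactly''---and the quantity $\mathsf{T}/2-f$ honest leaders per schedule does not follow from anything stated; the argument only needs that committed anchors occur after $\GST$ with bounded gaps, which Bullshark liveness already gives.
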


\begin{proof}
After $\GST$, a crashed node will not cast votes. As a result from the calculation of reputation scores, it will be in the $B$ set the latest $O(\mathsf{T})$ rounds after it crashed and will not get in the $G$ set for as long as it is crashed. 
Therefore, the number of rounds for which no honest party commits a vertex is bounded by $O(\mathsf{T})$ for each of the up to $f$ crashed leaders.

\end{proof}

\section{Implementation} \label{sec:mplementation}
We implement a networked multi-core \sys validator in Rust forking the Narwhal-Bullshark implementation of Sui\footnote{\url{https://github.com/mystenlabs/sui}}. We select this codebase because it is the only production-ready implementation of a DAG-based consensus protocol deployed in the real world (at the best of our knowledge).
It uses Tokio\footnote{\url{https://tokio.rs}} for asynchronous networking, fastcrypto\footnote{\url{https://github.com/MystenLabs/fastcrypto}} for elliptic curve based  signatures. Data-structures are persisted using RocksDB\footnote{\url{https://rocksdb.org}}. We use QUIC\footnote{\url{https://github.com/quinn-rs/quinn}} to achieve reliable authenticated point-to-point channels.
By default, this Narwhal-Bullshark implementation uses traditional round-robin to elect leaders; we modify its leader election module to use \sys instead. Implementing our mechanism requires adding less than 600 LOC (+ 400 LOC of tests), and does not require any extra protocol message or cryptographic tool.
Contrarily to most prototypes, our implementation is \emph{production-ready} and \emph{fully-featured} (crash-recovery, monitoring tools, etc). It runs at the heart of 
\ifpublish
the Sui mainnet since version \texttt{mainnet-v1.9.1}~\footnote{\url{https://github.com/MystenLabs/sui/releases/tag/mainnet-v1.9.1}}.
\else
a major new blockchain mainnet.
\fi
We open source our implementation of \sys\footnote{
    \ifpublish
        \url{https://github.com/asonnino/sui/tree/hammerhead} (commit \texttt{03c96a3})
    \else
        Code already available and deployed but link omitted for blind review.
    \fi
}.
\section{Evaluation} \label{sec:evaluation}
We evaluate the throughput and latency of \sys through experiments on Amazon Web Services (AWS). We then show its improvements over the baseline round-robin leader-rotation mechanism of Bullshark~\cite{spiegelman2022bullshark}.
We aim to demonstrate the following claims.
\begin{itemize}
    \item \textbf{C1:} \sys introduces no throughput loss and even provides small latency gains when the protocol runs in ideal conditions (faultless setting).
    \item \textbf{C2:} \sys drastically improves latency and throughput in the presence of crash-faults; and its benefit increases with the number of faults.
    \item \textbf{C3:} \sys does not suffer from any visible throughput degradation despite (crash-)faulty validators. Note that evaluating BFT protocols in the presence of Byzantine faults is an open research question~\cite{twins}.
\end{itemize}

\para{Experimental setup}
We deploy our fully-featured \sys testbed on AWS, using \texttt{m5d.8xlarge} instances across 13 different AWS regions: N. Virginia (us-east-1), Oregon (us-west-2), Canada (ca-central-1), Frankfurt (eu-central-1), Ireland (eu-west-1), London (eu-west-2), Paris (eu-west-3), Stockholm (eu-north-1), Mumbai (ap-south-1), Singapore (ap-southeast-1), Sydney (ap-southeast-2), Tokyo (ap-northeast-1), and Seoul (ap-northeast-2). Validators are distributed across those regions as equally as possible. Each machine provides 10Gbps of bandwidth, 32 virtual CPUs (16 physical core) on a 2.5GHz, Intel Xeon Platinum 8175, 128GB memory, and runs Linux Ubuntu server 22.04. \sys persists all data on the NVMe drives provided by the machine (rather than the root partition). We select these machines because they provide decent performance and are in the price range of `commodity servers'.

In the following graphs, each data point is the average of the latency of all transactions of the run, and the error bars represent one standard deviation (errors bars are sometimes too small to be visible on the graph). We instantiate several geo-distributed benchmark clients submitting transactions at a fixed rate for a duration of 10 minutes; each benchmark client submits at most 350 tx/s and the number of clients thus depends on the desired input load.
The transactions processed by both systems are simple increments of a shared counter. The leader-reputation schedule is recomputed every 10 commits and excludes the 33\% less performant Validators
\ifpublish
    \footnote{Mainnet Sui uses more conservative parameters: it recomputes the schedule every 300 commits and only excludes the bottom 20\% of validators.}
\fi.
When referring to \emph{latency}, we mean the time elapsed from when the client submits the transaction to when it receives confirmation of the transaction's finality. When referring to \emph{throughput}, we mean the number of \emph{distinct} transactions over the entire duration of the run.

In addition to our codebase, we also open-source all orchestration and benchmarking scripts as well as measurements data\footnote{
    \ifpublish
        \url{https://github.com/asonnino/hammerhead-paper/tree/main/data}
    \else
        Already availlable but omitted for blind review.
    \fi
} to enable reproducible evaluation results.
Appendix~\ref{sec:tutorial} provides a tutorial to reproduce our experiments.

\para{Benchmark in ideal conditions}
\Cref{fig:no-faults} compares the performance of the baseline Bullshark and \sys running with 10, 50, and 100 honest validators.
Regardless of the committee size, the performance of Bullshark is similar to \sys. We observe a peak throughput around 4,000 tx/s (for committee sizes of 10 and 50) and 3,500 tx/s (for a committee size of 100) for both systems. The latency of Bullshark is slightly higher than \sys, at 3 seconds while \sys provides a latency of 2.7 seconds. This small latency gains is due to \sys's added benefit to focus on electing performant leaders. Leaders on more remote geo-locations that are typically slower are elected less often, the protocol is thus driven by the most performant parties.
These observations validate our claim \textbf{C1} stating that \sys introduces no throughput loss and provides small latency gains when the protocol runs in ideal conditions.

\begin{figure}[t]
    \centering
    \includegraphics[width=0.85\textwidth]{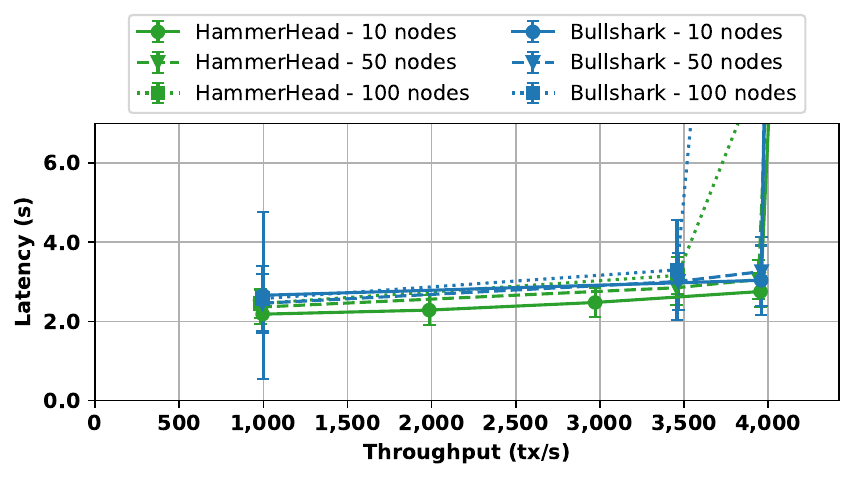}
    \caption{
        \sys and Bullshark latency-throughput performance with 10, 50, and 100 validators (no faults).
    }
    \label{fig:no-faults}
\end{figure}

\para{Benchmark with faults}
\Cref{fig:faults} compares the performance of Bullshark and \sys when a committee of 10, 50, and 100 validators respectively suffers 3, 16, and 33 crash-faults (the maximum that can be tolerated).

Bullshark suffers a massive degradation in both throughput and latency. For committee sizes of 10 and 50 suffering respectively 3 and 16 faults, the throughput of Bullshark drops by 25\% and its latency increases by 2-3x compared to ideal conditions.
In contrast, \sys only suffers a slight latency degradation (at most 0.5 second) due to a smaller pool of leaders to elect from. Notably, \sys does not suffer from any throughput degradation: it does not elect crashed leaders, the protocol continues to operate electing leaders from the remaining active parties, and is not overly affected by the faulty ones. This validates our claim \textbf{C2}.

The performance benefits of \sys are even more drastic for larger committees: for a committee size of 100 suffering 33 faults, the throughput of Bullshark drops by over 40\% and its latency increases 2x compared to ideal conditions. In contrast, \sys once again does not suffer from any throughput degradation and has only a slight latency increase.
We thus observe that \sys provides a 2x latency reduction and a  throughput increase ranging from 25\% (small committees) to 40\% (large committees) with respect to Bullshark. This validates our claim \textbf{C3}.

\begin{figure}[t]
    \centering
    \includegraphics[width=0.85\textwidth]{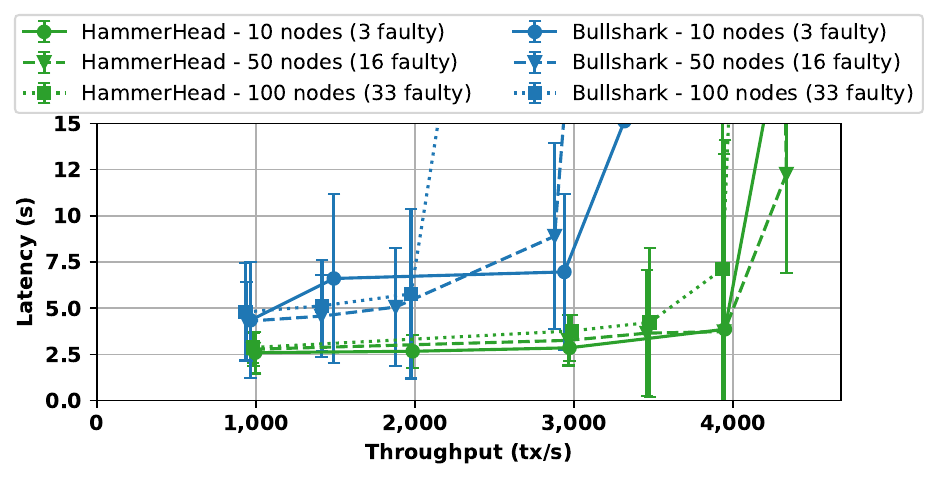}
    \caption{
        \sys and Bullshark performance with 10, 50, and 100 validators when experiencing their respective maximum number of tolerable faults. 
    }
    \label{fig:faults}
\end{figure}
\ifpublish
\section{Roadmap to Production} \label{sec:production}
We present an overview of our roadmap for the complete integration of \sys into the Sui mainnet. Despite the apparent simplicity of its algorithm, \sys brings substantial modifications to critical blockchain components, necessitating a comprehensive roadmap before its production deployment. This journey involved over four months of engineering effort by our team.

Our first milestone involved the implementation of the core reputation mechanism, which computes a reputation score for each validator and incorporates it into various non-critical aspects of the system, including a separate control system overseeing transaction submissions to consensus. This feature was seamlessly integrated ahead of the Sui mainnet launch.
Following this, we conducted extensive testing over several months to ensure that our chosen reputation metrics accurately reflected real-world performance. We have maintained continuous monitoring since the inception of the Sui mainnet, a period spanning approximately four months.
Next, we harnessed these reputation scores to fine-tune the leader schedule. This involved rigorous testing within our private deployments, followed by rigorous evaluation in the devnet and testnet environments, encompassing both load and failure testing. This phase consumed approximately 1.5 months.
With the confidence gained from successful private and test deployments, we made \sys publicly available and ran it for a month in the devnet and testnet environments. In parallel, it was integrated into the mainnet codebase, albeit gated through a protocol configuration and initially turned off.
Finally, we initiated the switch and incorporated \sys as a pivotal component of mainnet version 1.9.1\footnote{\url{https://github.com/MystenLabs/sui/releases/tag/mainnet-v1.9.1}}, corresponding to Sui protocol version 23, marking the culmination of this meticulous integration process.
\fi
\section{Related Work} \label{sec:related}
Carousel~\cite{cohen2022aware} presents the first reputation-based leader-rotations mechanisms for SMR protocols providing Leader Utilization. It specifically targets chained consensus protocols~\cite{buchman2016tendermint, yin2019hotstuff, baudet2019state, chan2020streamlet, gelashvili2022jolteon,chen2023resilient,cohen2022proof} and its main challenge thus lies in achieving Chain Quality~\cite{garay2015bitcoin}, which entails limiting the number of committed blocks proposed by Byzantine validators.
In contrast, \sys is tailored for DAG-based consensus protocols~\cite{danezis2022narwhal, keidar2022dagrider, spiegelman2022bullshark, spiegelman2023shoal, gao2022dumbo, yang2022dispersedledger, blackshear2023sui,malkhi2022maximal} and thus encounters distinct challenges. Unlike chained consensus protocols, DAG-based protocols do not preserve safety when validators disagree on the identity of the leader. As a result, \sys cannot simply
leverage the state of every view to recompute the reputation scores because different validators may commit the same block in different views. This distinction necessitates that we open the black-box of the DAG and adapt our interpretation of it to ensure both safety and liveness. On the positive side, using a DAG eliminates the need to be concerned about Chain Quality as \sys directly inherits it from underlying DAG, even if all leaders are malicious. Consequently, \sys forgoes the need to ensure that honest leaders make sufficiently frequent proposals.

One extreme scenario we also explored is that of the classic static leader that pre-blockchain BFT protocols used (e.g., PBFT~\cite{castro1999practical}), however, the risk of having a leader that performs just slow enough to not cause a gap in the schedule (and a subsequent ``schedule change'') is too great for the slight benefits of having a above average performance leader more often. We leave an open question if we can have a small subset of active leaders or a more adaptive reputation scoring mechanism to exploit the most performant nodes as leaders more often.

The Shoal framework~\cite{spiegelman2023shoal} (concurrent work, first appeared on ArXiv on June 2023) is the closest system to \sys. Shoal's primary objective is to lower the latency associated with DAG-based consensus, employing various strategies that include a leader-reputation mechanism like \sys.
Similar to \sys, Shoal's leader-reputation mechanism maintains a record of scores for each validator and employs a deterministic rule to recalibrate the mapping from rounds to leaders based on these scores. Shoal conceptually leaves open the choice of this deterministic rule and its implementation assigns higher scores to committed leaders and lower scores to leaders that were skipped. Conversely, \sys assigns scores based on the frequency of votes for leaders, discouraging Byzantine actors from withholding their votes for honest leaders.
Shoal and \sys however mostly diverge in their areas of emphasis. Shoal takes a broader perspective, focusing on reducing the latency of DAG-based consensus through additional techniques like consensus pipelining and prevalent responsiveness~\cite{spiegelman2023shoal}, while \sys entirely focuses on leader-reputation, offering detailed algorithms and formal security proofs.

\section{Conclusions}
\label{sec:conclusion}

This paper introduces \sys, a novel leader-aware SMR custom-designed for DAG-based consensus protocols. Drawing inspiration from Carousel and harnessing on-chain metrics, \sys achieves high leader utilization. To achieve this it  addresses the unique challenges posed by DAG structures, where block commitments lack synchronization across all nodes, by reinterpreting the DAG to ensure both safety and liveness. \sys's dynamic leader schedule adjustment, based on validator activity and reliability, optimizes leader selection while preserving system safety. This approach ensures sustained performance and throughput even in the presence of crash faults, outperforming existing leader-based protocols like Bullshark. In summary, \sys's implementation showcases its robustness in various scenarios, emphasizing the critical importance of leader-awareness in such systems.

\ifpublish
    \section*{Acknowledgements}
This work is supported by Mysten Labs. We thank the Mysten Labs Engineering teams for valuable feedback broadly, and specifically to Laura Makdah for helping implementing the early reputation score system for validators and Dmitry Perelman for managing the overall implementation effort.
\fi

\bibliographystyle{plain}
\bibliography{bib,cryptobib/abbrev3.bib,cryptobib/crypto}

\begin{thebibliography}{10}

\bibitem{twins}
Shehar Bano, Alberto Sonnino, Andrey Chursin, Dmitri Perelman, Zekun Li, Avery Ching, and Dahlia Malkhi.
\newblock Twins: Bft systems made robust.
\newblock In {\em 25th International Conference on Principles of Distributed Systems}, 2022.

\bibitem{baudet2019state}
Mathieu Baudet, Avery Ching, Andrey Chursin, George Danezis, Fran{\c{c}}ois Garillot, Zekun Li, Dahlia Malkhi, Oded Naor, Dmitri Perelman, and Alberto Sonnino.
\newblock State machine replication in the libra blockchain.
\newblock {\em The Libra Assn., Tech. Rep}, 1(1), 2019.

\bibitem{blackshear2023sui}
Sam Blackshear, Andrey Chursin, George Danezis, Anastasios Kichidis, Lefteris Kokoris-Kogias, Xun Li, Mark Logan, Ashok Menon, Todd Nowacki, Alberto Sonnino, et~al.
\newblock Sui lutris: A blockchain combining broadcast and consensus.
\newblock Technical report, Technical Report. Mysten Labs. https://sonnino. com/papers/sui-lutris. pdf, 2023.

\bibitem{buchman2016tendermint}
Ethan Buchman.
\newblock {\em Tendermint: Byzantine fault tolerance in the age of blockchains}.
\newblock PhD thesis, University of Guelph, 2016.

\bibitem{castro1999practical}
Miguel Castro, Barbara Liskov, et~al.
\newblock Practical byzantine fault tolerance.
\newblock In {\em OsDI}, 1999.

\bibitem{chan2020streamlet}
Benjamin~Y Chan and Elaine Shi.
\newblock Streamlet: Textbook streamlined blockchains.
\newblock In {\em Proceedings of the 2nd ACM Conference on Advances in Financial Technologies}, pages 1--11, 2020.

\bibitem{chen2023resilient}
Junchao Chen, Suyash Gupta, Alberto Sonnino, Lefteris Kokoris-Kogias, and Mohammad Sadoghi.
\newblock Resilient consensus sustained collaboratively.
\newblock {\em arXiv preprint arXiv:2302.02325}, 2023.

\bibitem{cohen2022aware}
Shir Cohen, Rati Gelashvili, Lefteris~Kokoris Kogias, Zekun Li, Dahlia Malkhi, Alberto Sonnino, and Alexander Spiegelman.
\newblock Be aware of your leaders.
\newblock In {\em International Conference on Financial Cryptography and Data Security}, pages 279--295. Springer, 2022.

\bibitem{cohen2022proof}
Shir Cohen, Guy Goren, Lefteris Kokoris-Kogias, Alberto Sonnino, and Alexander Spiegelman.
\newblock Proof of availability \& retrieval in a modular blockchain architecture.
\newblock {\em Cryptology ePrint Archive}, 2022.

\bibitem{danezis2022narwhal}
George Danezis, Lefteris Kokoris-Kogias, Alberto Sonnino, and Alexander Spiegelman.
\newblock Narwhal and tusk: a dag-based mempool and efficient bft consensus.
\newblock In {\em Proceedings of the Seventeenth European Conference on Computer Systems}, pages 34--50, 2022.

\bibitem{JACM:DwoLynSto}
Cynthia Dwork, Nancy Lynch, and Larry Stockmeyer.
\newblock Consensus in the presence of partial synchrony.
\newblock {\em Journal of the ACM (JACM)}, 35(2):288--323, 1988.

\bibitem{gao2022dumbo}
Yingzi Gao, Yuan Lu, Zhenliang Lu, Qiang Tang, Jing Xu, and Zhenfeng Zhang.
\newblock Dumbo-ng: Fast asynchronous bft consensus with throughput-oblivious latency.
\newblock In {\em Proceedings of the 2022 ACM SIGSAC Conference on Computer and Communications Security}, pages 1187--1201, 2022.

\bibitem{garay2015bitcoin}
Juan Garay, Aggelos Kiayias, and Nikos Leonardos.
\newblock The bitcoin backbone protocol: Analysis and applications.
\newblock In {\em Annual international conference on the theory and applications of cryptographic techniques}, pages 281--310. Springer, 2015.

\bibitem{gelashvili2022jolteon}
Rati Gelashvili, Lefteris Kokoris-Kogias, Alberto Sonnino, Alexander Spiegelman, and Zhuolun Xiang.
\newblock Jolteon and ditto: Network-adaptive efficient consensus with asynchronous fallback.
\newblock In {\em International Conference on Financial Cryptography and Data Security}, pages 296--315. Springer, 2022.

\bibitem{keidar2022dagrider}
Idit Keidar, Eleftherios Kokoris-Kogias, Oded Naor, and Alexander Spiegelman.
\newblock All you need is dag.
\newblock In {\em Proceedings of the 2021 ACM Symposium on Principles of Distributed Computing}, page 165–175, 2021.

\bibitem{malkhi2022maximal}
Dahlia Malkhi and Pawel Szalachowski.
\newblock Maximal extractable value (mev) protection on a dag.
\newblock {\em arXiv preprint arXiv:2208.00940}, 2022.

\bibitem{spiegelman2023shoal}
Alexander Spiegelman, Balaji Aurn, Rati Gelashvili, and Zekun Li.
\newblock Shoal: Improving dag-bft latency and robustness.
\newblock {\em arXiv preprint arXiv:2306.03058}, 2023.

\bibitem{spiegelman2022bullshark}
Alexander Spiegelman, Neil Giridharan, Alberto Sonnino, and Lefteris Kokoris-Kogias.
\newblock Bullshark: Dag bft protocols made practical.
\newblock In {\em Proceedings of the 2022 ACM SIGSAC Conference on Computer and Communications Security}, pages 2705--2718, 2022.

\bibitem{CCS:SGSK22}
Alexander Spiegelman, Neil Giridharan, Alberto Sonnino, and Lefteris Kokoris-Kogias.
\newblock Bullshark: {DAG} {BFT} protocols made practical.
\newblock In Heng Yin, Angelos Stavrou, Cas Cremers, and Elaine Shi, editors, {\em ACM CCS 2022}, pages 2705--2718. {ACM} Press, November 2022.

\bibitem{aptos}
The~Aptos team.
\newblock \url{https://aptoslabs.com}, 2023.

\bibitem{sui}
The~Sui team.
\newblock \url{http://sui.io}, 2023.

\bibitem{yang2022dispersedledger}
Lei Yang, Seo~Jin Park, Mohammad Alizadeh, Sreeram Kannan, and David Tse.
\newblock $\{$DispersedLedger$\}$:$\{$High-Throughput$\}$ byzantine consensus on variable bandwidth networks.
\newblock In {\em 19th USENIX Symposium on Networked Systems Design and Implementation (NSDI 22)}, pages 493--512, 2022.

\bibitem{yin2019hotstuff}
Maofan Yin, Dahlia Malkhi, Michael~K Reiter, Guy~Golan Gueta, and Ittai Abraham.
\newblock Hotstuff: Bft consensus with linearity and responsiveness.
\newblock In {\em Proceedings of the 2019 ACM Symposium on Principles of Distributed Computing}, pages 347--356, 2019.

\end{thebibliography}

\appendix
\section{Reproducing Experiments} \label{sec:tutorial}

We provide the orchestration scripts
\footnote{
\ifpublish
\url{https://github.com/asonnino/sui/tree/hammerhead} (commit \texttt{03c96a3})
\else
Link available but omitted for blind review.
\fi
} used to benchmark the codebase evaluated in this paper on AWS .

\para{Deploying a testbed}
The file `\texttildelow/.aws/credentials' should have the following content:
\begin{footnotesize}\begin{verbatim}
[default]
aws_access_key_id = YOUR_ACCESS_KEY_ID
aws_secret_access_key = YOUR_SECRET_ACCESS_KEY
\end{verbatim}\end{footnotesize}
configured with account-specific AWS \emph{access key id} and \emph{secret access key}. It is advise to not specify any AWS region as the orchestration scripts need to handle multiple regions programmatically.

A file `settings.json' contains all the configuration parameters for the testbed deployment. We run the experiments of \Cref{sec:evaluation} with the following settings:

\begin{footnotesize}
    \begin{lstlisting}[language=json]
{
    "testbed_id": "${USER}-hammerhead",
    "cloud_provider": "aws",
    "token_file": "/Users/${USER}/.aws/credentials",
    "ssh_private_key_file": "/Users/${USER}/.ssh/aws",
    "regions": [
        "us-east-1",
        "us-west-2",
        "ca-central-1",
        "eu-central-1",
        "ap-northeast-1",
        "ap-northeast-2",
        "eu-west-1",
        "eu-west-2",
        "eu-west-3",
        "eu-north-1",
        "ap-south-1",
        "ap-southeast-1",
        "ap-southeast-2"
    ],
    "specs": "m5d.8xlarge",
    "repository": {
        "url": "https://github.com/AUTHOR/REPO.git",
        "commit": "hammerhead"
    }
}
\end{lstlisting}
\end{footnotesize}

where the file `/Users/\${USER}/.ssh/aws' holds the ssh private key used to access the AWS instances, and `AUTHOR' and `REPO' are respectively the GitHub username and repository name of the codebase to benchmark.

The orchestrator binary provides various functionalities for creating, starting, stopping, and destroying instances. For instance, the following command to boots 2 instances per region (if the settings file specifies 13 regions, as shown in the example above, a total of 26 instances will be created):
\begin{footnotesize}\begin{verbatim}
cargo run --bin orchestrator -- testbed deploy --instances 2
\end{verbatim}\end{footnotesize}
The following command displays he current status of the testbed instances
\begin{footnotesize}\begin{verbatim}
cargo run --bin orchestrator testbed status
\end{verbatim}\end{footnotesize}
Instances listed with a green number are available and ready for use and instances listed with a red number are stopped. It is necessary to boot at least one instance per load generator, one instance per validator, and one additional instance for monitoring purposes (see below).
The following commands respectively start and stop instances:
\begin{footnotesize}\begin{verbatim}
cargo run --bin orchestrator -- testbed start
cargo run --bin orchestrator -- testbed stop
\end{verbatim}\end{footnotesize}
It is advised to always stop machines when unused to avoid incurring in unnecessary costs.

\para{Running Benchmarks}
Running benchmarks involves installing the specified version of the codebase on all remote machines and running one validator and one load generator per instance. For example, the following command benchmarks a committee of 100 validators (none faulty) under a constant load of 1,000 tx/s for 10 minutes (default), using 3 load generators:
\begin{footnotesize}\begin{verbatim}
cargo run --bin orchestrator -- benchmark \
    --committee 100 fixed-load --loads 1000 \
    --dedicated-clients 3 --faults 0 
    --benchmark-type 100
\end{verbatim}\end{footnotesize}
The parameter \texttt{benchmark-type} is set to 100 to instruct the load generators to sequence all transactions through the consensus engine. We select the number of load generators by ensuring that each individual load generator produces no more than 350 tx/s (as they may quickly become the bottleneck).

\para{Monitoring}
The orchestrator provides facilities to monitor metrics. It deploys a Prometheus instance and a Grafana instance on a dedicated remote machine. Grafana is then available on the address printed on stdout when running benchmarks with the default username and password both set to \texttt{admin}. An example Grafana dashboard can be found in the file `grafana-dashboard.json'\footnote{
    \ifpublish
        \url{https://github.com/asonnino/sui/blob/hammerhead/crates/orchestrator/assets/grafana-dashboard.json}
    \else
        Link available but omitted for blind review.
    \fi
}.

\para{Troubleshooting}
The main cause of troubles comes from the genesis. Prior to the benchmark phase, each load generator creates a large number of gas object later used to pay for the benchmark transactions. This operation may fail if there are not enough genesis gas objects to subdivide or if the total system gas limit is exceeded. As a result, it may be helpful to increase the number of genesis gas objects per validator in the `genesis\_config' file\footnote{
    \ifpublish
        \url{https://github.com/asonnino/sui/blob/03c96a3648f40f89bd78930b837aa1393bab73ec/crates/sui-swarm-config/src/genesis\_config.rs\#L360}
    \else
         Link available but omitted for blind review.
    \fi
} when running with very small committee sizes (such as 10).

\end{document}